\documentclass[3p]{elsarticle} 
 \myfooter[C]{}
\renewcommand{\thispagestyle}[1]{}
\usepackage{booktabs}

\date{}
\def\texpsfig#1#2#3{\vbox{\kern #3\hbox{\includegraphics{#1}\kern #2}}\typeout{(#1)}}
\usepackage{url,hyperref}
\hypersetup{
    colorlinks=true,
    linkcolor=blue,
    filecolor=magenta,      
    urlcolor=cyan,
    pdfpagemode=FullScreen,
    }
\usepackage{bbm}
\usepackage{eurosym}                           
\usepackage[latin1]{inputenc}                  
\usepackage{url}                               
\usepackage{longtable}                         
\usepackage{array}                             
\usepackage{graphicx,color}
\usepackage{amsthm}
\usepackage{amsbsy}
\usepackage{amssymb}
\usepackage{amsmath}
\usepackage{enumerate}
\usepackage{graphicx,color}
\usepackage{epsfig}
\usepackage{multirow,bigdelim}
\usepackage[table]{xcolor}
\usepackage{natbib}
\usepackage{float}
\usepackage{algorithm} %
\usepackage{algpseudocode}
\usepackage{mathtools}
\usepackage{thmtools}
\usepackage{cleveref}
\theoremstyle{plain}
\newtheorem{theorem}{Theorem}[section]
\newtheorem{cor}{Corollary}[section]
\newtheorem{prcd}{Procedure}
\newtheorem{dfn}[theorem]{Definition}

\newtheorem{example}{Example}[section]

\newtheorem*{rem}{Remark}
\theoremstyle{remark}

\theoremstyle{plain}
\newtheorem{lem}[theorem]{Lemma}

  {\par\noindent\textit{Proof of the claim#1.}\quad}
  {\hfill$\square$\par}

\theoremstyle{definition}

\newcommand{\e}{{\rm e}}        
\def\R{\mathbb{ R}}             
\def\vix{X}    
\def\E{\mathbb{ E}}             
\def\Q{\mathbb{ Q}}  
\def\N{\mathbb{ N}}  

\def\pol{\mathcal{P}}

\def\F{\mathcal{F}}  

\newcommand{\mname}{VDV}
\newcommand{\modelname}{VIX-derived volatility }
\DeclarePairedDelimiter{\norm}{\lVert}{\rVert} 
\renewcommand{\d}{{\rm d}}      
\def\dW{{\rm d}W}               
\def\dt{{\rm d}t}

\DeclareMathOperator*{\arginf}{arg\,inf}
\def\ds{{\rm d}s}
\def\du{{\rm d}u}

\def\dS{{\rm d}S}
\def\dv{{\rm d}v}

\newcommand{\abs}[1]{| #1 |}

\newcommand{\bs}[1]{\boldsymbol{#1}}
\def\1{{\mathbbm{1}}}            

\theoremstyle{plain}

\usepackage[margin=1cm]{caption}
\numberwithin{equation}{section}	     

\title{\raggedright The VIX-Derived Volatility Model: A VIX-first Joint SPX-VIX Framework 
}

\begin{document}
\author[1,3]{NICOLA F. ZAUGG \corref{cor1}}
\ead{N.F.Zaugg@uu.nl}
\author[1,2]{\raggedright LECH A. GRZELAK}
\ead{L.A.Grzelak@uu.nl}
\cortext[cor1]{Corresponding author.}
\address[1]{Mathematical Institute, Utrecht University, Utrecht, the Netherlands}
\address[2]{Financial Engineering, Rabobank, Utrecht, the Netherlands}
\address[3]{LGT Bank, Zurich, Switzerland}

\begin{abstract} 
We propose the \emph{\modelname(VDV)} model, a VIX-first framework for joint SPX-VIX modeling. In the model, we define explicit dynamics for the VIX process to price VIX futures and options, yielding a VIX-side calibration that is independent of the SPX dynamics. Using the rolling-window definition of the VIX, we then derive a coupling function to obtain the SPX volatility process as a latent process consistent with the calibrated VIX dynamics. This produces a stochastic-volatility representation for SPX that can be further calibrated to SPX options without altering the VIX dynamics. Relative to existing joint-calibration approaches based on global optimization or highly flexible black-box dynamics, the method offers an interpretable and tractable decomposition of the joint problem by separating the calibration of the VIX from the calibration of the SPX. In a numerical experiment, we show that a \mname~model with local volatility and mean-reverting dynamics for the VIX achieves a close model fit to the VIX futures market, VIX option market, and the SPX option market at the same time, providing a consistent joint framework. 
\noindent 
\end{abstract}

\begin{keyword}
VIX Local Volatility, Stochastic Volatility, SPX VIX Joint Calibration
\end{keyword}
\maketitle
{\let\thefootnote\relax\footnotetext{The views expressed in this paper are the author's personal views and do not necessarily reflect the views or policies of their current or past employers. The authors have no competing interests.}}
\title{VIX as Forward Implied Variance and Options on VIX}
\author{}
\date{}
\section{Introduction}
The joint modeling of S\&P~500 (SPX) and VIX derivatives presents a persistent challenge in quantitative finance. The \emph{CBOE Volatility Index} (VIX) is a financial index measuring the expected future volatility of the SPX, derived from the SPX option market . The definition of the VIX links the spot price of the VIX to index options on the SPX. These SPX options, on the other hand, are sensitive to the VIX volatility , i.e., the volatility of the volatility of the SPX. Since both the VIX and the SPX have actively traded index options, joint modeling requires a model that fits both option markets, as well as the VIX futures market. The structural interdependence complicates the construction of models that consistently reproduce both markets, as each market component requires a separate calibration. This is especially challenging for short maturities where the SPX skew is steep, and VIX smiles are relatively flat \citep{JacquierMartiniMuguruzaRoughBergomiVIX}.

Traditional approaches to the joint problem rely on parametric stochastic volatility models, such as Heston-type or rough volatility frameworks, which attempt to capture both SPX and VIX dynamics within a unified specification \citep{JacquierMartiniMuguruzaRoughBergomiVIX, abi2025joint}. These models often face calibration tensions and require careful tuning to reconcile the two surfaces. More recent nonparametric methods, including martingale optimal transport and Schr\"odinger bridge constructions, offer greater flexibility but typically involve global optimization across marginal distributions \citep{GuoLoeperOblojWang2021, Guyon2024FinanceStoch}. Neural SDEs have also been proposed to jointly calibrate SPX and VIX surfaces with high flexibility \citep{guyon2022neural}.

In this work, we propose a modular framework that reverses the conventional modeling order. We begin by defining the VIX index dynamics as a separate model that can be calibrated to VIX option prices and VIX futures. This yields a one-factor diffusion with transition density and drift-volatility coefficients inferred from market data. Leveraging the rolling-window definition of VIX \citep{CBOEVIXOptions}, we derive a backward equation for the conditional expectation of integrated variance. The backward equation reduces the specification of a coupling map between the VIX index and a volatility process to a single interval, called the terminal interval. We show how to derive this coupling map for \emph{polynomial processes}, and how to approximate it for general, non-polynomial processes. The framework generalizes stochastic volatility models with a single driver for volatility and VIX. It yields a transparent and tractable framework for joint SPX-VIX modeling, grounded in observable market data and realistic VIX and volatility processes. 

\subsection{Definition of a Joint SPX-VIX framework}
\label{sec:def_spxvix}
We consider the price process $(S_t)_{t \geq t_0}$ of the discounted and dividend-adjusted SPX on a risk-neutral probability space $(\Omega, \F, \Q)$ with filtration $\mathbb{F} = \{\F_t, t\geq t_0\}$. Its dynamics are taken to be
\[
\frac{\dS_t}{S_t} = \sigma_t\, \dW_t,
\qquad v_t := \sigma_t^2,
\]
where $W$ is a Brownian motion and $v_t$ is the \emph{instantaneous variance} of the process. By It\^o's formula, we derive the log-transformation of the process as 
\[
\log \frac{S_T}{S_{t_0}}
= -\tfrac12\int_{t_0}^T v_u\,\du + \int_{t_0}^T \sigma_u\,\dW_u.
\]
Taking conditional expectation eliminates the martingale term:
\begin{equation}\label{eq:log_contract}
\mathbb{E}\!\left[\log\frac{S_T}{S_{t_0}} \mid  \F_{t_0}\right] = -\tfrac12\,\mathbb{E}\!\left[\int_{t_0}^T v_u\,\du \mid \F_{t_0}\right].
\end{equation}
The VIX index emerges naturally as the square root of the average expected variance over a 30-day window. Throughout the paper, we denote the constant $\Theta=30/365$ as the 30-day time horizon in years. The forward average variance over an interval $[t,t+\Theta]$ for $t \geq t_0$ is defined as 
\[
\bar v_t := \frac{1}{\Theta}\int_t^{t+\Theta} v_u\,\du.
\]
Applying \eqref{eq:log_contract} shows that the conditional expectation of the average variance can be written as the log-transformed SPX process:
\[
\mathbb{E}[\bar v_t \mid \F_t]
= -\frac{2}{\Theta}\,
\mathbb{E}_{t}\!\left[\log\frac{S_{t+\Theta}}{S_t}\mid \F_t \right].
\]
We define the VIX process\footnote{Note that in practice, the VIX is scaled by a factor of 100. Without loss of generality, we will assume an unscaled  VIX process.}$(\vix_t)_{ t\geq t_0}$ as the $\F_t$-adapted process of the square root of this forward expected variance, annualized and scaled:
\begin{equation}
\label{def:VIX}
\vix_t
:= \sqrt{\mathbb{E}[\bar v_t \mid \F_t]}
= \sqrt{-\tfrac{2}{\Theta}\,
\mathbb{E}\!\left[\log\frac{S_{t+\Theta}}{S_t}\mid \F_t\right]}.
\end{equation}
The VIX index is actively being traded through a futures market. Furthermore, there is a highly liquid market for European options on the VIX index. A European call or put with strike $K \in \R_+$ and maturity $T \geq t_0$ has payoff 
\[(\vix_T-K)^+ \quad \text{or} \quad (K-\vix_T)^+,\]
and must have an arbitrage-free price at time $t \leq T$ under the risk-neutral measure $\mathbb{Q}$. A reasonable joint model of $(S_t, \vix_t)$ therefore must be calibrated to both the SPX options market, as well as the VIX options and futures market, where the VIX dynamics are given by \Cref{def:VIX}.
The pricing formulas reveal the distributional dependencies of $(S_t)_{t\geq t_0}$ when valuing SPX and VIX options. For European options on the SPX maturing at a time $T > t_0$, the option price only depends on the distribution of the variable 
\[
Z_T := \log S_T,
\]
since plain calls and puts depend only on $S_T$.  
For VIX options, by rewriting the settlement index as
\[
\vix_T^2 = -\frac{2}{\Theta}\,\Big(Y_T-Z_T\Big),
\qquad
Y_T=\mathbb{E}\!\big[\log S_{T+\Theta} \mid \F_T\big],
\]
it is clear that valuation of an $X_T$-claim depends on the joint distributional properties of the $\F_T$-random variable $(Z_T,Y_T)$.  

We summarize these insights with the definition of a joint framework:
\begin{dfn}[Joint SPX VIX Framework]
    Let $(S_t)_{t\geq t_0}$ be a model process for the SPX and let $(\vix_t)_{t\geq t_0}$ be given by (\ref{def:VIX}). We call $(S_t, \vix_t)_{t\geq t_0}$ a \emph{joint framework} for the SPX and VIX if the model prices of the following $T$-claims match the observed market prices for all expiries $T \geq t_0$:
    \begin{enumerate}[i)]
    \itemsep0em
        \item SPX European options: \quad $(S_T-K)^+ $ or $(K-S_T)^+$
        \item VIX European options: \quad $(\vix_T-K)^+ $ or $ (K-\vix_T)^+$ 
        \item VIX Futures contracts: \quad $(\vix_T-K)$
    \end{enumerate}
    
\end{dfn}
A consistent joint model of the SPX and VIX derivatives markets is not solely a theoretically motivated challenge, but is of practical importance for hedging and risk management, since market participants routinely use SPX options to hedge VIX exposure and vice versa~\cite{bardgett2019inferring}. For instance, portfolio managers use VIX options for tail-risk hedging of SPX portfolios.
\subsection{Existing Research}
The joint SPX-VIX calibration problem has sparked extensive research in mathematical finance over the last two decades. The most common approaches attempt to apply classical stochastic volatility models to the SPX-VIX problem. These methods specify dynamics for the instantaneous variance process $(v_t)_ {t\geq t_0}$ as a Markovian semimartingale, which then flows into the definitions of $(S_t)_{t \geq 0}$ and $(\vix_t)_{t \geq 0}$. If $(v_t)_ {t\geq t_0}$ is flexible enough, the model can then be calibrated to SPX options, VIX options, and VIX futures. Various articles examined classical stochastic volatility models for $(v_t)_ {t\geq t_0}$, or adaptations thereof. Gatheral considered a CEV model~\cite{gatheral2008consistent} for a joint model, Baldeaux and Badran~\cite{baldeaux2014consistent} utilized a 3/2 model with jumps, and Fouque and Saporito~\cite{fouque2018heston} a generalization of the Heston model. Grzelak~\cite{grzelak2026randomization} considered randomizations of affine diffusion processes, and Ballotta et al. considered time-changed L\'evy processes~\cite{ballotta2025term}. These models were only partially successful at fitting to the market. A possible explanation for this phenomenon was shown by Guyon et al~\cite{guyon2020inversion}, where they explored a general phenomenon for stochastic volatility models and posed a necessary condition for such models to fit the market, called the \emph{inversion of complex order}. 

Since classical stochastic volatility models struggle with this condition and are therefore not general enough to fit both markets, researchers extended the models in various directions. Abi-Jaber et al.~\cite{abi2025joint} introduced \emph{Gaussian Polynomial} volatility models, and Cuchiero et al.~\cite{cuchiero2025joint} utilize signature models to define the volatility process. Alternative approaches to stochastic volatility models for the joint problem exist. In the \emph{martingale optimal transport} approach, the problem is formulated as an optimal transport problem in discrete time~\cite{GuoLoeperOblojWang2021, Guyon2024FinanceStoch}. 
Although the novel approaches produce promising numerical results, a common issue with these models is complexity. The instantaneous variances are modeled with increasing complexity by adding additional stochastic drivers or functionals. This increases both model and numerical complexity, leading to long calibration times, unstable parameters, and results that are difficult to interpret in practice.

Lastly, a similar method to our proposed framework was introduced by Papanicolaou~\cite{papanicolaou2022consistent}. The model derives a stochastic volatility process from a factor model, that is simultaneously used to calibrate to the VIX futures market. Similarly, Papanicolaou shows how to derive a coupling function between the processes. The approach, however, focuses on VIX futures market models and not VIX options, and is restricted to time-homogeneous factors.
\subsection{Proposed Framework}

We propose an intuitive approach to joint SPX-VIX modeling, called the \modelname (\mname) framework. The model is a \emph{stochastic volatility model}: the instantaneous variance of the SPX, $(v_t)_{t\geq t_0}$, is an adapted stochastic process. Unlike classical stochastic volatility models, however, $(v_t)_{t \geq t_0}$ is not modeled explicitly but instead emerges as a latent process derived from the dynamics of the VIX. The VIX process itself is modeled explicitly, allowing seamless calibration to both the VIX futures and options markets. We then derive the corresponding instantaneous volatility dynamics by defining a smooth coupling map $\psi: [t_0,\infty) \times \R_+ \to \R_+$ such that $v_t = \psi_t(\vix_t)$. This map depends on the specification of $(\vix_t)_{t \geq t_0}$ and can be derived explicitly for polynomial processes, a specific process class. In the general, non-polynomial case, we derive a numerical approximation of the map based on the same principles.

Utilizing the mean-reverting local volatility framework following the work of Drimus and Farkas~\cite{drimus2013local}, we introduce the Local Volatility VIX model, an implementation of the VDV model. The explicit dynamics of the VIX process $(\vix_t)_{t\geq t_0}$ allow us to fit VIX options and futures with distinct parameters for each component. We then derive the SPX dynamics that are calibrated to the SPX option market. The numerical results show that the model generates realistic mean-reverting dynamics for the VIX, and is flexible enough to reprice the SPX options market.

The remainder of the paper is organized as follows. Having introduced the VIX and the joint VIX--SPX calibration problem above, we describe in \Cref{sec:model} the general framework of the \mname~model for capturing the joint $\Q$-dynamics of the SPX and the VIX. We then introduce the properties of the coupling function linking the two processes and derive the backward equation for a consistent VDV framework, given by \Cref{eq:backwards_main}. A first example of a VDV model is provided in \Cref{ex:gbm}, illustrating the framework on a simple diffusion process. In \Cref{sec:polynomial} we introduce polynomial processes and show how they are used to derive the coupling function for general processes. An algorithmic description of the full process to derive the coupling function is included in \ref{app:algo}. In \Cref{sec:lv-vix} we introduce the LV-VIX model, a local-volatility-based \mname~model that produces realistic volatility dynamics, and in \Cref{sec:num} we apply this implementation to market data from April 27th, 2026. We conclude the paper in \Cref{sec:conclusion}, followed by \ref{app:lvol}, which contains accompanying theoretical results.

\section{A VIX-first Stochastic Volatility Model}
\label{sec:model}
\subsection{General Model Dynamics}
The \emph{\mname} model is a joint SPX/VIX framework emerging from an explicit stochastic model for the VIX process. Let $(\Omega, \F, \Q)$ be a probability space with filtration $\F_t, t\geq t_0$. We define the discounted SPX process $(S_t)_{t\geq t_0}$ and the VIX process $(\vix_t)_{t\geq t_0}$ through a system of two differential equations linked through a function $\psi_t(x)$. The dynamics of $(\vix_t)_{t\geq t_0}$ follow a general It\^o process with drift term $\mu(t,x)$ and diffusion term $\sigma(t,x)$. We consider the following dynamics on $[t_0, \infty)$:
\begin{align}
    \frac{\dS_t}{S_t} &= \sqrt{v_t}\, \dW^{(S)}_t,\label{eq:general_S}\\
    v_t &= \psi_t(\vix_t),\label{eq:general_v}\\
    \d\vix_t &= \mu(t,\vix_t)\,\dt +  \sigma(t,\vix_t)\dW_t^{(\vix)}.\label{eq:general_vix}
\end{align}
The expressions $W^{(\vix)}_t, W^{(S)}_t$ are Brownian motions with a correlation coefficient given by $\dW^{(\vix)}_t\dW^{(S)}_t = \rho \dt$. Since the VIX is a strictly positive quantity, we require $\mu(t,x), \sigma(t,x)$ to be chosen such that $(\vix_{t})_{t \geq t_0}$ is almost surely a positive process. Furthermore, for \Cref{lem:rolling} we require that the functions satisfy the usual linear growth conditions. There exists a constant $K > \R$, such that
\begin{equation}
    \label{eq:linear_growth}
    \abs{\mu(t,x)} + \abs{\sigma(t,x)} \leq K (1+ x), \quad x \in \R_+.
\end{equation}
The function $\psi: [t_0,\infty) \times \R_+ \to \R_+$ is called the \emph{coupling function}, as it couples the VIX at time $t$ to the \emph{instantaneous variance} $(v_t)_{t \geq t_0}$. We study this coupling function in detail in \Cref{sec:coupling}. For the system to be well-defined, we require it to be positive and measurable, such that $(v_t)_{t \geq t_0}$ is an adapted, almost surely positive stochastic process. Although the instantaneous variance process $(v_t)_{t\geq t_0}$ is not explicitly modeled by a stochastic differential equation, one can derive the SDE for $(v_t)_{t\geq t_0}$ through It\^o's lemma under additional assumptions of differentiability of $\psi$.
\subsection{From VIX to Instantaneous Variance - Coupling Function}
\label{sec:coupling}
The essence of the joint model defined in \crefrange{eq:general_S}{eq:general_vix} is the coupling function $\psi: [t_0,\infty) \times \R_+ \to \R_+$, that links the two stochastic processes through the process of the instantaneous variance $(v_t)_{t\geq t_0}$. Although not strictly required, we will assume $\psi_t(x)$ to be smooth in the space components (i.e., $ x \mapsto \psi_t(x) \in C^\infty, \forall t\geq t_0 $), and at least continuous in the time direction. The continuity of $\psi_t$ ensures that $(v_t)_{t \geq t_0}$ is adapted and almost surely continuous. A direct consequence of such a coupling model is that $v_t$ is a $\vix_t$-measurable random variable for all $t\geq t_0$,  since $\psi$ is a deterministic function. Therefore, we have the interpretation of $\psi_t$ as the conditional expectation of $v_t$ given the current level $\vix_t$.
\begin{equation}
    \E\left[v_t \,\middle| \,\vix_t\right] = v_t = \psi_t(\vix_t).
\end{equation}
Here, $\E\left[v_t \,\middle| \,\vix_t\right]$ represents the conditional expectation of $v_t$ given the $\sigma$-algebra $\sigma(\vix_t)$ generated by $\vix_t$.

Under the positivity assumptions, the general model of \crefrange{eq:general_S}{eq:general_vix} is well-defined. However, since the VIX-process is also derived from the SPX as described in \Cref{sec:def_spxvix}, the choice of $\psi_t$ is not free, as the definition of the VIX through \Cref{def:VIX} must be fulfilled for all times $t\geq t_0$. The additional condition on the interaction of $(\vix_t)_{t \geq t_0}$ and $(S_t)_{t \geq t_0}$ further restricts the choice of $\psi_t$. We introduce the concept of \emph{consistency} of the coupling function $\psi$.
\begin{dfn}[Consistency]
\label{def:consistent}
We call the model \crefrange{eq:general_S}{eq:general_vix} consistent if for all $t\geq t_0$, we have
\begin{equation}\label{eq:consistency-equation}
\vix_t^2 = \frac{1}{\Theta}\,
\mathbb{E}\!\left[\int_t^{t+\Theta} v_u\,\du \;\middle|\; \mathcal{F}_t\right], \quad \text{a.s.}
\end{equation}
In this case, we call $\psi$ a consistent coupling function. \Cref{eq:consistency-equation} is called the \emph{consistency equation}
\end{dfn}
The definition of consistency poses the questions of existence and uniqueness of a consistent $\psi$ given a process $(\vix_t)_{t \geq t_0}$. To answer these questions, we first assume that a consistent $\psi$ exists. We will now show that, under the assumption of consistency, the specification of $\psi$ can be restricted to a shorter time interval $[T,T+\Theta]$ of length $\Theta$, as $\psi_t$ outside of the interval can be uniquely traced back to the definition of $\psi_t$ inside the interval. This temporal dependency evolves as a necessary condition of \Cref{eq:consistency-equation} through the Feynman-Kac theorem. We first apply the Feynman-Kac theorem to the right-hand side of \Cref{eq:consistency-equation} to obtain a partial differential equation for the conditional expectation.
\begin{lem}[Rolling-window backward equation]
\label{lem:rolling}
Suppose the dynamics of a stochastic process $(\vix_t)_{t\geq t_0}$ solve the differential equation \[\d\vix_s = \mu(s, \vix_s) \, \ds + \sigma(s, \vix_s) \, \dW_s^{(\vix)},\] where $\mu(t,x), \sigma(t,x)$ are continuous and satisfy the linear growth condition, and let
\[
I_t(x) := \mathbb{E} \left[ \int_t^{t+\Theta} \psi_u(\vix_u) \, \du \,\middle|\, \vix_t = x \right], \qquad
\phi_t(x) := \mathbb{E} \left[ \psi_{t+\Theta}(\vix_{t+\Theta}) \,\middle|\, \vix_t = x \right].
\]
Suppose that for all $t$ the function $\psi_t(x)$ is smooth and $\psi_t(\vix_t)$ is integrable. Then, for any $t \geq t_0$, the expression $I_t(x)$ satisfies:
\begin{equation}
\label{eq:diff_1}
\partial_t I_t(x) + \mu(t, x) \partial_x I_t(x) + \tfrac{1}{2} \sigma^2(t, x) \partial_{xx} I_t(x) + \psi_t(x) - \phi_t(x) = 0.
\end{equation}
\end{lem}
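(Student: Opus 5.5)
The plan is to introduce the two-parameter family of backward functions
\[
g(t,x;u) := \mathbb{E}\left[\psi_u(\vix_u)\,\middle|\,\vix_t = x\right], \qquad t \leq u,
\]
and write $I_t(x) = \int_t^{t+\Theta} g(t,x;u)\,\du$, $\phi_t(x) = g(t,x;t+\Theta)$ and $\psi_t(x) = g(t,x;t)$. For each fixed $u$, the function $(t,x)\mapsto g(t,x;u)$ is a conditional expectation of a terminal payoff of the Markov diffusion $\vix$. Continuity of the coefficients, the linear growth condition \eqref{eq:linear_growth}, and integrability of $\psi_u(\vix_u)$ put us in the setting of the Feynman--Kac theorem, which I would invoke in its standard form, as the paper does. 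This gives, on $t<u$,
\[
\partial_t g + \mu(t,x)\partial_x g + \tfrac12\sigma^2(t,x)\partial_{xx} g = 0, \qquad g(u,x;u)=\psi_u(x).
\]
I will assume $g$ is $C^{1,2}$ in $(t,x)$; this is where smoothness of $\psi_u$ enters, as the terminal condition.

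Next I differentiate $I_t(x)$ in $t$ with the Leibniz rule. The moving upper limit contributes $+g(t,x;t+\Theta)=\phi_t(x)$, the moving lower limit contributes $-g(t,x;t)=-\psi_t(x)$, and the interior contributes $\int_t^{t+\Theta}\partial_t g(t,x;u)\,\du$. I then pass $\partial_x$ and $\partial_{xx}$ under the integral sign and use the PDE for each $u$:
\[
\partial_t I_t + \mu\,\partial_x I_t + \tfrac12\sigma^2\partial_{xx} I_t = \phi_t(x) - \psi_t(x) + \int_t^{t+\Theta}\big(\partial_t g + \mu\,\partial_x g + \tfrac12\sigma^2\partial_{xx} g\big)\,\du = \phi_t(x)-\psi_t(x).
\]
This is exactly \eqref{eq:diff_1}.

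A probabilistic check of the same identity is available, and it can also serve as an alternative proof. The process $M_t := I_t(\vix_t) + \int_{t_0}^t \psi_u(\vix_u)\,\du - \int_{t_0+\Theta}^{t+\Theta}\psi_u(\vix_u)\,\du$ is a martingale, since it equals $\mathbb{E}\left[\int_{t_0}^{t_0+\Theta}\psi_u(\vix_u)\,\du\,\middle|\,\F_t\right]$ plus a correction. The drift of the correction term $\int^{t+\Theta}\psi_u(\vix_u)\,\du$ produces $\phi_t$ after conditioning, and applying It\^o's formula to $I_t(\vix_t)$ then shows that the drift $\mathcal{L}I+\psi-\phi$ must vanish.

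The main obstacle is justification rather than algebra. Specifically, three things need care:
\begin{itemize}
\item regularity of $g$ up to and including the diagonal $u=t$, needed for the Leibniz rule;
\item uniform integrability in $u$ of $\partial_t g$, $\partial_x g$ and $\partial_{xx} g$ over $[t,t+\Theta]$, so that differentiation commutes with the $\du$-integral;
\item the Feynman--Kac conditions themselves.
\end{itemize}
My first attempt would rely on linear growth together with a polynomial-growth bound on $\psi$ and its derivatives, giving moment bounds $\mathbb{E}[\sup_{s\le T}|\vix_s|^p]<\infty$ and hence dominated convergence. Where this is too delicate, I would fall back on interpreting \eqref{eq:diff_1} in the classical sense under the standing smoothness assumptions of the paper.
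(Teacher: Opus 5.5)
Your main argument is correct and is essentially the paper's proof. The paper sets $W(s,t,x)=\mathbb{E}[\int_t^s\psi_u(\vix_u)\,\du\mid\vix_t=x]$, uses the running-cost Feynman--Kac equation $(\partial_t+\mathcal{L}_\vix)W+\psi_t=0$, and obtains $\phi_t$ from $\partial_s W$ via the chain rule along $s=t+\Theta$. You instead apply the terminal-payoff Feynman--Kac equation to each $g(\cdot,\cdot;u)$ and get $-\psi_t$ from the lower Leibniz boundary term, which is the same computation one level down and rests on the same assumed $C^{1,2}$ regularity and differentiation under the integral.

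The only thing to fix is your optional martingale check. Your $M_t$ is not $\F_t$-adapted because of the term $\int_{t_0+\Theta}^{t+\Theta}\psi_u(\vix_u)\,\du$, so it is not a martingale. The correct statement is that $I_t(\vix_t)+\int_{t_0}^t\psi_u(\vix_u)\,\du-\int_{t_0}^t\phi_u(\vix_u)\,\du$ is a martingale; this follows from the Markov property $\mathbb{E}[\psi_{r+\Theta}(\vix_{r+\Theta})\mid\F_r]=\phi_r(\vix_r)$ and the tower property. It\^o's formula, applied to the diffusion started from an arbitrary $(s,x)$, then yields the PDE pointwise.
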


\begin{proof}
Fix $s \geq t_0$ and consider the auxiliary function:
\[
W(s,t, x) := \mathbb{E} \left[ \int_t^{s} \psi_u(\vix_u) \, \du \,\middle|\, \vix_t = x \right].
\]
Note that $W(t,t, x) = 0$ and our target function is $I_t(x) = W(t+\Theta,t, x)$. Since $\phi$ is bounded and measurable, the function $(t,x) \mapsto W(s,t, x)$ is of the class $C^{1,2}$. By the Feynman-Kac backward relation for a running cost with a terminal condition of zero at $t=s$, we obtain
\[
(\partial_t + \mathcal{L}_\vix) W(s,t, x) + \psi_t(x) = 0, \qquad W(s,s, x) = 0,
\]
where $\mathcal{L}_\vix f(t, x) := \mu(t, x) \partial_x f(t, x) + \tfrac{1}{2} \sigma^2(t, x) \partial_{xx} f(t, x)$. 

By the integrability of $\psi$, we can differentiate the function $W(s,t,x)$ with respect to its first argument, and we obtain, using the Leibniz rule,
\[
\partial_s W(s,t, x) |_{s=s} = \mathbb{E} \left[ \psi_s(\vix_s) \,\middle|\, \vix_t = x \right].
\]
Now, we apply the operator $(\partial_t + \mathcal{L}_\vix)$ to the function $I_t(x) = W(t+\Theta,t, x)$, using the chain rule for the moving horizon $s(t) = t + \Theta$:
\[
(\partial_t + \mathcal{L}_\vix) I_t(x) = \left[ (\partial_t + \mathcal{L}_\vix) W(s,t, x)  \right]_{s=t+\Theta} + \left[ \partial_s W(s,t, x) \frac{\ds}{\dt} \right]_{s=t+\Theta}.
\]
Since $\frac{\ds}{\dt} = 1$, substituting our previous identities yields:
\[
(\partial_t + \mathcal{L}_\vix) I_t(x) = -\psi_t(x) + \mathbb{E} \left[ \psi_{t+\Theta}(\vix_{t+\Theta}) \,\middle|\, \vix_t = x \right].
\]
Recalling the definition $\phi_t(x) := \mathbb{E} [ \psi_{t+\Theta}(\vix_{t+\Theta}) | \vix_t = x ]$, we arrive at:
\[
\partial_t I_t(x) + \mu(t, x) \partial_x I_t(x) + \tfrac{1}{2} \sigma^2(t, x) \partial_{xx} I_t(x) + \psi_t(x) - \phi_t(x) = 0.
\]
This proves the claim.
\end{proof}
The differential equation \Cref{eq:diff_1} describes the time dynamics of $I_t(x)$. Since we defined $v_t$ in our model to be $\sigma(\vix_t)$-measurable, we can derive that under a consistent model, $I_t(x)$ describes the (scaled) squared VIX process. From the definition of a consistent model, we apply the tower property and find 
\begin{align*}
    \Theta \vix_t^2 &= \Theta \E \left[ \vix_t^2\,\middle|\, \vix_t \right] \\
    &= \E \left[ \mathbb{E}\!\left[\int_t^{t+\Theta} v_u\,\du \;\middle|\; \mathcal{F}_t\right]\,\middle|\, \vix_t \right]\\
    &= \E\!\left[\int_t^{t+\Theta} v_u\,\du \;\middle|\; \vix_t \right]\\
    &= I_t(\vix_t).
\end{align*}
In a consistent model, $I_t(\vix_t)$ is explicitly given by $\Theta \vix_t^2$, and we can apply the newly derived backward equation to this formula. This yields an explicit temporal dependency of $\psi_t$.
\begin{cor}[Consistency relation]
In a consistent model we have $I_t(x) = \Theta x^2$ and therefore we find
\begin{equation}
    \label{eq:backwards_main}
    \psi_t(x)=\phi_t(x)-\Theta\big(2x\,\mu(t,x)+\sigma^2(t,x)\big).
\end{equation}
\end{cor}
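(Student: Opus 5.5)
The plan is to substitute the identification $I_t(x)=\Theta x^2$ into the rolling-window backward equation \eqref{eq:diff_1} of \Cref{lem:rolling} and solve the resulting identity for $\psi_t(x)$. The corollary then reduces to two steps: justify the identification as a statement about functions of $(t,x)$, and compute the derivatives of $\Theta x^2$.

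For the first step, I use the tower-property computation preceding the corollary. It gives $I_t(\vix_t)=\Theta\vix_t^2$ almost surely for every $t\geq t_0$. Since $\vix$ is a Markov diffusion, $I_t(x)$ is by definition a version of the conditional expectation given $\vix_t=x$. Hence $I_t(x)=\Theta x^2$ holds for $\Q\circ\vix_t^{-1}$-almost every $x$.

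To upgrade this to all $x$ in the state space, I would argue as follows. Both sides are continuous in $x$: $I_t$ is $C^{1,2}$ by the proof of \Cref{lem:rolling}, and $\Theta x^2$ is a polynomial. It therefore suffices that the law of $\vix_t$ has full support on the relevant region, which holds for instance when $\sigma>0$ (non-degenerate diffusion). For the initial time $t=t_0$, where $\vix_{t_0}$ may be deterministic, I would either state the identity for $t>t_0$ and extend it by continuity in $t$, or simply take the identification as the working definition of consistency at the level of the function $I$. This is the step I expect to need the most care. The cleanest route is to adopt the support assumption explicitly and note that the resulting relation is a necessary condition on the support of $\vix_t$.

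For the second step, I compute
\[
\partial_t I_t(x)=0,\qquad \partial_x I_t(x)=2\Theta x,\qquad \partial_{xx}I_t(x)=2\Theta.
\]
Substituting into \eqref{eq:diff_1} gives
\[
2\Theta x\,\mu(t,x)+\Theta\sigma^2(t,x)+\psi_t(x)-\phi_t(x)=0.
\]
Rearranging yields exactly \eqref{eq:backwards_main}. The regularity hypotheses of \Cref{lem:rolling} (continuity and linear growth of $\mu,\sigma$; smoothness and integrability of $\psi$) are assumed throughout the model setup, so the lemma applies directly.
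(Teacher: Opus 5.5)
Your proposal is correct and follows the paper's proof: substitute $I_t(x)=\Theta x^2$, obtained from the tower-property computation, into \eqref{eq:diff_1}, use $\partial_t I=0$, $\partial_x I=2\Theta x$, $\partial_{xx}I=2\Theta$, and rearrange. Your additional continuity-plus-full-support argument, which upgrades the almost-sure identity $I_t(\vix_t)=\Theta\vix_t^2$ to an identity of functions that can be differentiated in $t$ and $x$, fills in a step the paper leaves implicit.
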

\begin{proof}
With $I_t(x) = \Theta x^2$ we have $\partial_t I=0$, $\partial_x I=2\Theta x$, and $\partial_{xx}I=2\Theta$. Plugging into \Cref{lem:rolling} yields
\[
0=\Theta\big(2x\,\mu(t,x)+\sigma^2(t,x)\big)+\psi_t(x)-\phi_t(x),
\]
which rearranges to the stated identity.
\end{proof}

The backward equation is not a definition of $v_t = \psi_t(\vix_t)$. Rather, it maps the definition of $v_{t+\Theta}$ to the $v_t$ for the fixed time $\Theta = \frac{30}{365}$. In particular, given the process $\vix_t$, if we define $v_t$ on a domain $[T, T+\Theta]$, we can derive $v_t$ for any $t \leq T$. To see this, let the function $\psi_t(x)$ be given on $[T, T+\Theta]$. We can derive $\psi_t(x)$ for any $t \in [T-\Theta, T)$ through \Cref{eq:backwards_main}, by 
\[  \psi_t(X_t)=\E[\psi_{t + \Theta} (X_{t+\Theta})\mid X_{t}]-\Theta\big(2X_t\,\mu(t,X_t)+\sigma^2(t,X_t)\big).\]
Since we have now derived $v_t = \psi_t(X_t)$ on $[T-\Theta,T+\Theta]$, we can repeat the argument and derive $\psi_t$ for all $t \geq t_0$. This shows that a consistent model is fully determined by its restriction to a single interval of length $\Theta$. For this reason, we introduce the concept of a \emph{terminal interval} $[T_{F}, T_F+\Theta]$, where $T_F > t_0$, over which we will choose the definition of $\psi$ as part of the modeling. We denote the function $\psi_t$ on the terminal interval as a \emph{terminal curve} \begin{equation}
    \bs{\psi}_{T_F} := \{\psi_t \in C^\infty:  t \in [T_F, T_F+\Theta]\}.
\end{equation} 

Although processes $v_t$ and $\vix_t$ exist beyond the terminal interval, we will assume that beyond $T_F+\Theta$, the exact value of $\psi_t$ and $v_t$ are left unspecified. By introducing the terminal curve, we can reduce the construction of $\psi_t$ to a single interval $[T_F, T_F+\Theta]$, after which the process on $[t_0,T_F+\Theta)$ is given by $\vix_t$ and the backward equation in \Cref{eq:backwards_main}. To ensure consistency in the model and well-definedness, we constrain a terminal curve with additional conditions of continuity and consistency:
\begin{dfn}[Consistent Terminal Curve]
\label{def:consistency}
A \emph{consistent terminal curve} for a terminal time $T_F \geq t_0$ is the family of functions $\bs{\psi}_{T_F} = \{\psi_t \in C^\infty:  t \in [T_F, T_F+\Theta]\}$, such that
    \begin{enumerate}[i)]
\itemsep0em
    \item $\psi_{T_{F}}(x), \psi_{T_F+\Theta}(x)$ fulfill the backward relation
    \item The map $t \mapsto \psi_t$ is continuous on $[T_F, T_F+\Theta]$
\item The consistency equation is fulfilled at $t = T_{F}$: \begin{equation*}
    \Theta X_{T_{F}}^2 = \mathbb{E}\!\left[\int_{T_{F}}^{{T_{F}}+\Theta} \psi_{u}(\vix_u)\,\du \;\middle|\; X_{T_{F}}\right], \quad \text{a.s.}
\end{equation*}
\end{enumerate}
\end{dfn}
These conditions ensure that the previous definition of consistency is fulfilled across the entire interval $[t_0,T_F+\Theta]$. While Condition ii) is a basic assumption of $\psi$ across time, Condition i) ensures that the previously established backward relation is also true at the terminal time. As we will show below, Condition iii) suffices to ensure that the model remains consistent across the entire time interval. 

The three requirements for a consistent terminal curve are relatively weak and allow for a large range of terminal curves. Defining a function $\psi_{T_F+\Theta}(.)$ specifies $\psi_{T_{F}}(.)$ through Condition i). Since these two functions are homotopic, any continuous deformation of these functions will fulfill Conditions i) and ii). It then remains to deform them in a way such that property iii) is fulfilled, which is non-unique.

It remains to show that a consistent terminal curve \emph{remains} consistent as we move backward in time
\begin{lem}[Backward Consistency]
\label {lem:consistent}
    Let $(\vix_t)_{t \geq t_0}$ be a given VIX process in a VDV framework, and consider a terminal curve $\bs{\psi}_{T_F} = \{\psi_t \in C^\infty : t \in [T_F, T_F+\Theta]\}$ which is consistent.
    Then, for all $t \leq T_F$, we have
    \[\vix_{t}^2 = \frac{1}{\Theta}\E\left[\int_{t}^{t+\Theta} \psi_u(\vix_u) \du \,\bigg|\, \vix_{t}\right] .\]
\end{lem}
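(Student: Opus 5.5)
The plan is to show that the defect function
\[
D_t(x) := I_t(x) - \Theta x^2
\]
vanishes for all $t \le T_F$. I would propagate $D \equiv 0$ backward from the terminal time $T_F$, where it holds by Condition iii).

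\textbf{Step 1 (a PDE for $D$).} For $t \le T_F$, the function $\psi_t$ is defined through the backward relation \Cref{eq:backwards_main}. At $t=T_F$ this is guaranteed by Condition i), and the continuity in Condition ii) lets the relation pass to the junction point. Hence
\[
\psi_t(x) - \phi_t(x) = -\Theta\big(2x\mu(t,x) + \sigma^2(t,x)\big) \quad \text{for all } t \le T_F.
\]
Substituting this into the rolling-window equation \eqref{eq:diff_1} of \Cref{lem:rolling} gives
\[
(\partial_t + \mathcal{L}_\vix) I_t(x) = \Theta\big(2x\mu + \sigma^2\big).
\]
The function $\Theta x^2$ satisfies the same inhomogeneous equation: its time derivative is zero and $\mathcal{L}_\vix(\Theta x^2) = \Theta(2x\mu + \sigma^2)$. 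Subtracting, $D$ solves the homogeneous backward Kolmogorov equation
\[
(\partial_t + \mathcal{L}_\vix) D_t(x) = 0, \qquad t \le T_F,
\]
with terminal condition $D_{T_F}(x) = 0$. This terminal condition holds for every $x$ in the support of $\vix_{T_F}$, by Condition iii).

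\textbf{Step 2 (uniqueness, via a martingale argument).} Rather than invoke abstract PDE uniqueness, I would apply It\^o's formula to $D_s(\vix_s)$ on $[t, T_F]$. The drift vanishes by Step 1, so $D_s(\vix_s)$ is a local martingale. To upgrade it to a true martingale I would use the linear growth condition \eqref{eq:linear_growth}: it gives $\E[\sup_s \vix_s^2] < \infty$, which controls the $\Theta x^2$ part. Integrability of $\psi_u(\vix_u)$, together with a localization and dominated convergence argument, controls $I$. It then follows that
\[
D_t(\vix_t) = \E\big[D_{T_F}(\vix_{T_F}) \mid \vix_t\big] = 0 \quad \text{a.s.},
\]
which is exactly the claimed identity $\Theta \vix_t^2 = I_t(\vix_t)$.

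An alternative to Step 2 is a direct argument by iteration. Write
\[
I_t - I_{T_F} = \E\Big[\int_t^{T_F} \psi_u \, du - \int_{t+\Theta}^{T_F+\Theta} \psi_u \, du\Big],
\]
use the backward relation for each $u$ to replace $\psi_u - \E[\psi_{u+\Theta} \mid \vix_u]$, and then apply Dynkin's formula to $\Theta \vix^2$. This gives
\[
\E\Big[\int_t^{T_F} \Theta(2\vix\mu + \sigma^2) \, du \;\Big|\; \vix_t\Big] = \Theta\Big(\E\big[\vix_{T_F}^2 \mid \vix_t\big] - \vix_t^2\Big).
\]
Combining this with Condition iii) and the tower property yields the result without any PDE regularity. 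I would in fact use this Dynkin/tower route as the main proof, since it needs only the backward relation pointwise in $u$ and the integrability assumptions.

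\textbf{Main obstacle.} I expect the hardest part to be justifying the regularity and integrability that these steps use. In the PDE route this is the $C^{1,2}$ smoothness of $I$ and the true-martingale property; in the Dynkin route it is that $\E \int_t^{T_F} |2\vix_u\mu + \sigma^2| \, du < \infty$ and that the stochastic integral $\int 2\vix\sigma \, dW$ has zero mean. Both should follow from the linear growth bound through the standard estimate $\E[\sup_{s \le T} \vix_s^2] \le C(1 + \vix_t^2)$.
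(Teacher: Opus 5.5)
Your proposal is correct. Steps 1--2 are essentially the paper's proof. The paper inserts the backward relation \eqref{eq:backwards_main} into the rolling-window PDE of \Cref{lem:rolling}, notes that $\Theta x^2$ solves it, and concludes by citing Feynman--Kac uniqueness under polynomial growth, with terminal data supplied by Condition iii). Your local-martingale argument for $D_s(\vix_s)$ just makes that uniqueness step explicit.

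The Dynkin/tower argument you designate as the main proof is a genuinely different route. Your ``$I_t - I_{T_F}$'' should read $I_t(\vix_t)-\E[I_{T_F}(\vix_{T_F})\mid\vix_t]$; written precisely, the argument is the decomposition
\[
I_t(\vix_t) = \E\Big[\int_t^{T_F}\big(\psi_u(\vix_u)-\phi_u(\vix_u)\big)\,\du \,\Big|\, \vix_t\Big] + \E\big[I_{T_F}(\vix_{T_F})\mid \vix_t\big].
\]
It follows from $\int_t^{t+\Theta}=\int_t^{T_F}-\int_{t+\Theta}^{T_F+\Theta}+\int_{T_F}^{T_F+\Theta}$ applied to $u\mapsto\psi_u(\vix_u)$, together with Fubini and the Markov property. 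The first term equals $-\Theta\big(\E[\vix_{T_F}^2\mid\vix_t]-\vix_t^2\big)$ by the backward relation and Dynkin's formula. The second equals $\Theta\,\E[\vix_{T_F}^2\mid\vix_t]$ by Condition iii).

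This route buys several things:
\begin{itemize}
\item It bypasses \Cref{lem:rolling} altogether. You need neither $C^{1,2}$ regularity of $I_t$ (not automatic for merely continuous, linearly growing coefficients) nor a uniqueness theorem for the Cauchy problem.
\item It uses the backward relation only for a.e.\ $u\in[t,T_F]$. This shows that Condition i) matters for continuity of $\psi$ at $T_F$ but not for consistency itself.
\item It uses Condition iii) only in its almost-sure form. The PDE route, by contrast, formally needs $I_{T_F}(x)=\Theta x^2$ for every $x$ in the state space.
\end{itemize}

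Its costs are the Markov property of $\vix$ (to replace conditioning on $\F_u$ by conditioning on $\vix_u$) and a moment bound for Dynkin's formula on $\vix^2$. On the latter: showing directly that $\int 2\vix_u\sigma(u,\vix_u)\,\dW_u$ has zero mean would require fourth moments. Stopping at exit times and passing to the limit by dominated convergence, using $|2x\mu+\sigma^2|\le C(1+x)^2$, works with the second-moment estimate you cite.

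The paper's route is shorter once \Cref{lem:rolling} is granted, and it matches the PDE machinery used in the numerics.
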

\begin{proof}
Consider the PDE (\ref{eq:diff_1}) on the interval $[t_0,T_F]$. We claim that if $\psi_t$ follows the backward equation, then $I_t(x) = \Theta x^2$ is a solution to this PDE. 
Since $\partial_t I = 0,\partial_x I = 2x\Theta $ and $\partial_xx I= 2\Theta$, we have
\begin{equation*}
\Theta \left[2 x \mu(t,x) +\sigma^2(t,x)\right] + \psi_t(x) - \phi_t(x) = 0,
\end{equation*}
This means that $I_t(x) = \Theta x^2$ is a solution to this PDE. Given the regularity conditions of the PDE, and since $\Theta x^2$ fulfills the boundary condition by the terminal curve assumption, we conclude by the Feynman-Kac uniqueness theorem under polynomial growth that $I_t(x) = \Theta x^2$ is the unique solution fulfilling the PDE. We can therefore conclude that
    \[\Theta x^2 = \E\left[\int_{t}^{t+\Theta} \psi_u(\vix_u) \du \,\bigg|\, \vix_{t} = x\right], \quad t \in [t_0,T_F],\]
from which we deduce the equation in the claim.
\end{proof}
While the lemma proves that a consistent terminal curve leads to a \emph{consistent} model (as defined by \Cref{def:consistent}), it is unclear if the model is well-defined. The reason is that $\psi_t$ is not guaranteed to be a positive function on $[t_0,T_F)$, even if the terminal curve is chosen entirely from positive functions. To our knowledge, there is no general condition to ensure positivity. For specific classes of $(\vix_t)_{t \geq t_0}$ and $\psi$, it is possible to show the existence of a consistent, positive $\psi$ across the entire interval (see \Cref{ex:gbm} and \Cref{sec:polynomial}).
\subsection{VIX as Geometric Brownian Motion}
\label{ex:gbm}
We now illustrate the theoretical concepts with an example where the VIX process $(\vix_t)_{t\geq t_0}$ follows a driftless geometric Brownian motion (GBM). We consider a \mname~model on $t \in [t_0,T_F+\Theta]$ with the following dynamics:
\begin{align*}
    \frac{\dS_t}{S_t} &= \sqrt{v_t}\dW_t^{(S)},\\
    v_t &= \psi_t(\vix_t),\\
    \d \vix_t &= \sigma \vix_t \dW_t^{(X)},
\end{align*}
where the parameter $ \sigma > 0$ defines the VIX volatility and $\vix_{t_0},S_{t_0}>0$ are the given start values. We aim to derive a consistent coupling function $\psi$ to complete the joint model. The first step is therefore to define a consistent terminal curve $\bs{\psi}_{T_F}$ for a terminal time $T_F$. Since the VIX process can be interpreted as the \emph{expectation of the average future variance $v_t$}, we use a quadratic ansatz $\psi_t(x) = x^2$ on $t \in [T_F,T_F+\Theta]$. We derive the right-hand side of the consistency equation (\ref{eq:consistency-equation}):
\begin{equation}
\label{eq:almost_consistent}
    \frac{1}{\Theta} \mathbb{E}\!\left[\int_t^{t+\Theta} \psi_u(X_u)\,\du \;\middle|\; \mathcal{F}_t\right] = \frac{1}{\Theta} \int_0^\Theta \e^{\sigma^2 u} \du \, \vix_t^2 = \frac{\left(\e^{\sigma^2 \Theta} - 1  \right)}{\sigma^2 \Theta} \, \vix_t^2 .
\end{equation}
This follows from the property of the geometric Brownian motion, where the conditional second moment is given analytically by
\begin{equation}
\label{eq:gbm_pattern}
    \E[\vix^2_{t+\Delta} |\vix_{t} ] = \vix^2_{t}\e^{\sigma^2\Delta},
\end{equation}
for any $\Delta \geq 0$. From \Cref{eq:almost_consistent} we see that the terminal curve is not consistent, as $\frac{\left( \e^{\sigma^2 \Theta} -1  \right)}{\sigma^2 \Theta} \neq 1$. To adjust for the scaling term, we therefore apply a scalar $c >0$ to the definition of $\psi$ and consider the case where the terminal curve is given by \[\psi_{t}(x) := cx^2, \quad c= \frac{\sigma^2 \Theta}{\left(\e^{\sigma^2 \Theta} -1 \right)},\]
for $t \in [T_F, T_F+\Theta]$. With this choice of $c$ we find from \Cref{eq:almost_consistent} that
\[    \frac{1}{\Theta} \mathbb{E}\!\left[\int_{T_F}^{T_F+\Theta} \psi_u(X_u)\,\du \;\middle|\; \mathcal{F}_{T_F}\right] = \frac{c}{\Theta} \int_0^\Theta \e^{\sigma^2 u} \du \, \vix_{T_F}^2 = \frac{c\,\left(\e^{\sigma^2 \Theta} -1 \right)}{\sigma^2 \Theta} \, \vix_{T_F}^2 = \vix_{T_F}^2,\]
leading to consistency at $T_F$. This means that the terminal curve $\bs{\psi}_{T_F}$ fulfills Conditions ii), iii) of a consistent terminal curve. We need to confirm that the endpoints $\psi_{T_F},\psi_{T_F+\Theta}$ fulfill the VIX backward equation. Applying the VIX backward equation in \Cref{eq:backwards_main} to $\psi_{T_F+\Theta}$, using the observation from \Cref{eq:gbm_pattern}, we find
\[\psi_{T_F}(x) =\left[c\e^{\sigma^2\Theta} -\sigma^2 \Theta \right]x^2  = \left[\frac{\sigma^2 \Theta\e^{\sigma^2\Theta}}{\left(\e^{\sigma^2 \Theta} -1 \right)} -\sigma^2 \Theta \right] x^2 = \frac{\sigma^2 \Theta}{\left(\e^{\sigma^2 \Theta} -1 \right)} x^2. \]
This shows that our terminal curve also fulfills Condition i), and we have found a consistent terminal curve.

We can then use the backward equation to compute $\psi$ on $[t_0, T_F)$. Since the VIX process is time-homogeneous, and since $\psi$ is constant in time over the terminal curve, we observe that the backward equation keeps $\psi$ constant until $t_0$, since the definition of $c$ is time-independent. We conclude then that the function $\psi$ such that $\psi_t(x) = c\,x^2$ for all $t \in [t_0,T_{F}+\Theta]$ is a consistent coupling function for the GBM-driven VDV model.

With the complete model, we simulate paths of $(\vix_t, S_t)_{t\geq t_0}$ using a Monte-Carlo procedure, and price VIX options and SPX options from the simulations. We choose $T_F = \frac{11}{12}$ and $\sigma=0.7$, resulting in a value of $c \approx 0.98$. \Cref{fig:gbm-experiment} shows the resulting implied volatility shapes at various time steps. For the VIX options we plot the implied volatilities directly obtained from $(\vix_t)_{t \geq t_0}$ (orange), as well as samples obtained through $(v_t)_{t \geq t_0}$ using a Least-Square Monte-Carlo approach to compute the conditional expectation of the right-hand side of the consistency equation (\ref{eq:consistency-equation}) (blue). Since the VIX process is a GBM, the resulting implied volatility slice is flat at a level of $\sigma$. For the SPX options, we obtain flexibility around the skew of the SPX by varying the correlation parameter $\rho \in [-1,1]$, without altering the distribution of $\vix_t$.
\begin{figure}[H]
    \centering
    \includegraphics[width=1\linewidth]{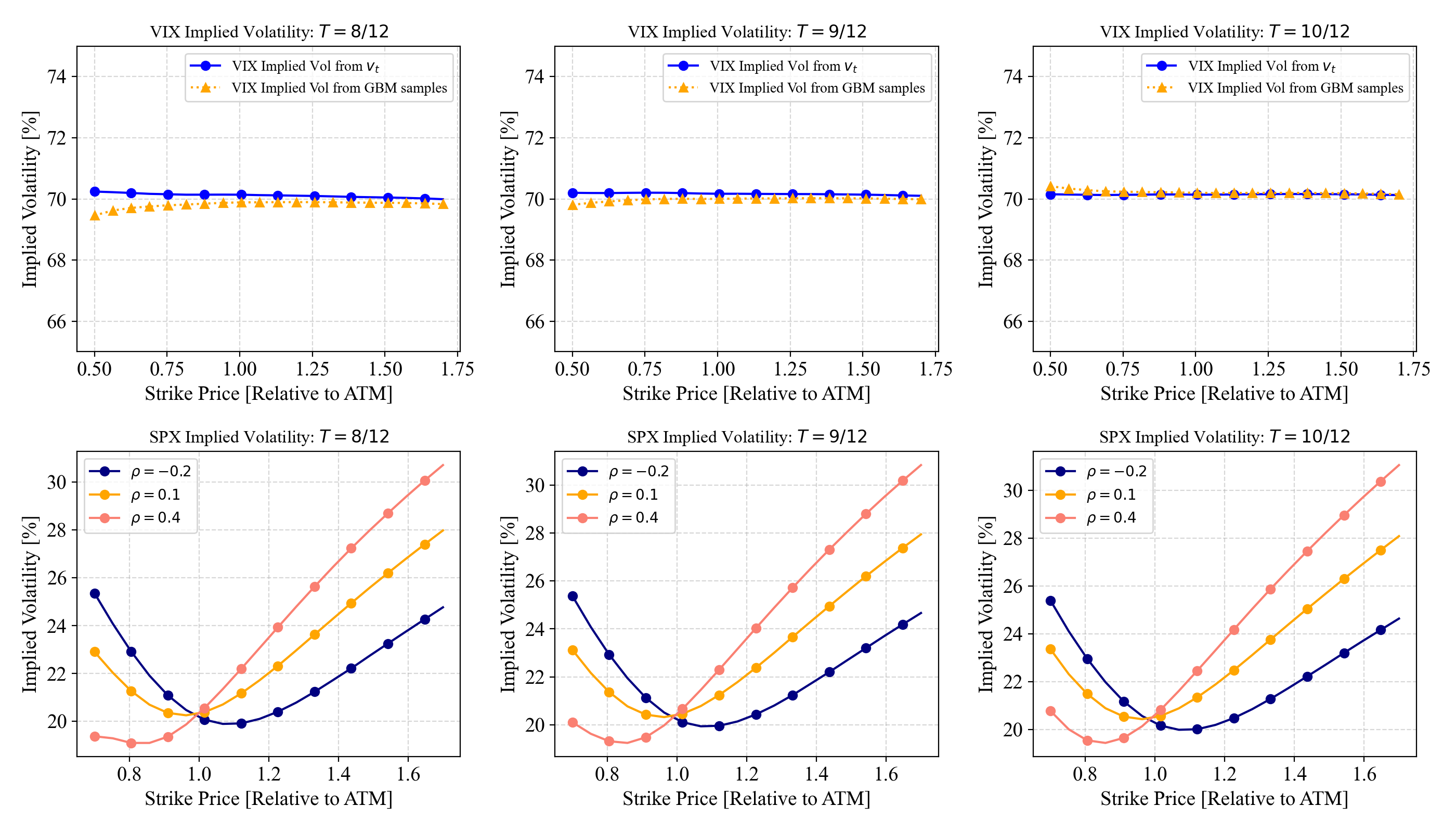}
    \caption{Pricing VIX options and SPX options in the GBM VDV Model: The backward equation leads to consistent paths.}
    \label{fig:gbm-experiment}
\end{figure}

\subsection{Consistency under Polynomial Processes}
\label{sec:polynomial}
The computations of \Cref{ex:gbm} showed that a reasonable ansatz for $\psi_t$ is to consider a function with a quadratic component in $x$, due to the interpretation of $\vix_t^2$ as the \emph{average expected volatility} over a 30-day horizon. In this section, we generalize the result of the GBM to a more general class of processes for $(\vix_t)_{t \geq t_0}$. We introduce \emph{polynomial processes} and show how to derive a consistent terminal curve to complete the VDV model. The theoretical insights of this section are key to approximating a consistent terminal curve for general VIX dynamics as in \Cref{eq:general_vix}.

Polynomial processes are a class of stochastic processes introduced by Cuchiero et al.~\cite{cuchiero2012polynomial} as a generalization of affine processes. The main property to characterize polynomial processes is that the conditional expectation preserves polynomial maps. Initially introduced for time-homogeneous processes, the notion was later generalized to time-inhomogeneous processes~\cite{agoitia2020time}. For our purpose, we define polynomial processes on a one-dimensional state space $\R$. Let $m \in \N$ be a positive integer. Consider the set of functions
\begin{equation}\pol_m  := \left\{ p : \mathbb{R}\to\mathbb{R} \; \middle| \; p(x) = \sum_{k=0}^m \alpha_k x^k,\ \ \alpha_0,\dots,\alpha_m \in \mathbb{R} \right\}, \quad m \in \N.\end{equation}
These sets are well-known to be finite-dimensional subspaces of $C^\infty(\R)$ for any $m \geq 0$. Polynomial processes are simply processes that are invariant on $\pol_m$ under conditional expectation.
\begin{dfn}
   We define an It\^o process $(\vix_t)_{t \geq t_0}$ to be a \emph{polynomial process} of degree $m \in \N$ if for any $p \in \pol_m$ and any $ t_0 \leq t \leq s$, we have
   \begin{equation}
       p_{s,t}(x) := \E\left[p(X_s) \mid X_t = x \right] \implies  p_{s,t}(x)\in \pol_m, \quad
   \end{equation}
\end{dfn}
The reason why polynomial processes are particularly useful for VIX modeling is that the left-hand side of the consistency equation (\ref{eq:consistency-equation}) is also a polynomial. Using linearity of expectation, we can then express the right-hand side as a polynomial as well and determine the coefficients of a consistent terminal curve. To prove this, we require a technical lemma about polynomial processes:
\begin{lem}
\label{lem:tech}
Let $(\vix_t)_{t \ge 0}$ be a polynomial process. For any $t_0 \le t < s$, the conditional expectation $\mathbb{E}[X_s^2 \mid X_t = x]$ is a polynomial in $x$ of degree exactly $2$ and its leading coefficient is strictly positive.
\end{lem}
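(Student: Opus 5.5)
Let $m \ge 2$ be the degree of the polynomial process. Then $x \mapsto x^2$ lies in $\pol_m$, and by definition $p_{s,t}(x) := \E[\vix_s^2 \mid \vix_t = x]$ is a polynomial in $\pol_m$. The plan is to show that its degree is at most $2$ and that its $x^2$-coefficient is strictly positive; together these give degree exactly $2$. I would work through the generator: write $\mathcal{L}_t f = \mu(t,x) f' + \tfrac12 \sigma^2(t,x) f''$ and use the characterization of (time-inhomogeneous) polynomial processes from Cuchiero et al.\ and Agoitia Hurtado--Schmidt. In that characterization, $\mathcal{L}_t$ maps each $\pol_k$ into itself. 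Applied to $f(x)=x$ and $f(x)=x^2$, this forces $\mu(t,\cdot)\in\pol_1$ and $\sigma^2(t,\cdot)\in\pol_2$, say $\mu(t,x)=a_0(t)+a_1(t)x$ and $\sigma^2(t,x)=b_0(t)+b_1(t)x+b_2(t)x^2$. The linear growth condition \eqref{eq:linear_growth} is compatible with this form.

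Next I would obtain the moments as solutions of linear ODEs. Apply It\^o's formula to $\vix_u$ and $\vix_u^2$ and take conditional expectations given $\vix_t=x$; linear growth ensures the stochastic integrals are true martingales and the moments are finite. Writing $m_1(u)=\E[\vix_u\mid \vix_t=x]$ and $m_2(u)=\E[\vix_u^2\mid \vix_t=x]$, this gives the triangular system
\begin{align*}
m_1'(u) &= a_0(u) + a_1(u)\, m_1(u), \\
m_2'(u) &= b_0(u) + \big(2a_0(u)+b_1(u)\big) m_1(u) + \big(2a_1(u)+b_2(u)\big) m_2(u),
\end{align*}
with $m_1(t)=x$ and $m_2(t)=x^2$. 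Solving explicitly by variation of constants shows that $m_1$ is affine in $x$ and $m_2$ is a polynomial of degree at most $2$ in $x$. The coefficient of $x^2$ comes only from propagating the initial value through the homogeneous part, so it equals
\[
\exp\!\Big(\int_t^s \big(2a_1(u)+b_2(u)\big)\,\du\Big) > 0.
\]
An exponential is strictly positive, so the degree is exactly $2$ with positive leading coefficient. In the GBM case of \Cref{ex:gbm} this reduces to $e^{\sigma^2(s-t)}$, matching \eqref{eq:gbm_pattern}.

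The main obstacle is the first step: the paper's definition of a polynomial process is stated only through invariance of $\pol_m$ under conditional expectation, not through the generator. I would recover the generator form by differentiating $u\mapsto \E[p(\vix_u)\mid \vix_t=x]$ at $u=t$ for $p(x)=x$ and $p(x)=x^2$. Dynkin's formula, together with continuity of $\mu$ and $\sigma$, identifies this derivative with $\mathcal{L}_t p(x)$. The derivative of a family in the finite-dimensional space $\pol_m$ stays in $\pol_m$, so $\mathcal{L}_t x = \mu(t,x)$ and $\mathcal{L}_t x^2 = 2x\mu(t,x)+\sigma^2(t,x)$ are polynomials. One must then argue that their degrees are at most $1$ and $2$ respectively. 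The route I would try first is the linear growth condition, which bounds $\abs{\mu}$ and $\sigma^2$ by $K(1+x)$ and $K^2(1+x)^2$; a polynomial with such a bound has degree at most $1$, respectively $2$. Should this step fail, I would simply invoke the generator characterization of Cuchiero et al.\ directly.
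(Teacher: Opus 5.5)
Your proof is correct and is essentially the paper's argument. The paper represents the generator on $\mathrm{span}\{1,x,x^2\}$ as an upper-triangular matrix and reads off the $(3,3)$ entry $\exp\bigl(\int_t^s (2\cdot\text{drift slope} + \text{quadratic coefficient of } \sigma^2)\,d\tau\bigr)$ of the backward state-transition matrix; your forward triangular moment ODEs produce the same exponential. The only real difference is that you justify the affine drift and quadratic $\sigma^2$ from the paper's $\pol_m$-invariance definition, using Dynkin's formula plus the linear growth bound, whereas the paper simply assumes that form.
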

\begin{proof}
    See \ref{app:proof_lem}
\end{proof}

\begin{theorem}
\label{thm:polynomial}
    Consider a VDV model with the VIX process $(\vix_t)_{t \geq t_0}$ being a polynomial process of degree $ m \geq 2, m \in \N$, and let $[T_F, T_F+\Theta]$ be a terminal time. Then, there exists a polynomial $p \in \pol_m$ of degree $m$, such that
    \begin{equation}
        \vix_{T_F}^2 = \frac{1}{\Theta}\,
\mathbb{E}\!\left[\int_{T_F}^{T_F+\Theta} p(\vix_u)\,\du \;\middle|\; \mathcal{F}_{T_F}\right], \quad \text{a.s.}
    \end{equation}
 
    \begin{proof}
        Consider a polynomial $p(x) := a_0 + a_1x + a_2x^2 + \dots + a_mx^m$ of degree $m$. Using the linearity of the conditional expectation, we find
        \begin{align*}
        \E\!\left[\int_{T_F}^{T_F+\Theta} p(\vix_u)\,\du \;\middle|\; \mathcal{F}_{T_F}\right]
        &= \E\!\left[\int_{T_F}^{T_F+\Theta} \sum_{k=0}^m a_k \vix_u^k \,\du \;\middle|\; \mathcal{F}_{T_F}\right] \\
        &= \sum_{k=0}^m a_k \int_{T_F}^{T_F+\Theta} \E\!\left[\vix_u^k \;\middle|\; \mathcal{F}_{T_F}\right] \du \\
        &= \sum_{k=0}^m a_k \int_{T_F}^{T_F+\Theta} \big(p_{T_F,u, k}\big)(\vix_{T_F}) \,\du,
    \end{align*}
    where $p_{T_F,u, k}$ is a polynomial of $k$ degrees, and the interchange of integration and conditional expectation is justified by Fubini's theorem. Let us denote $b(u;l,k)$ the $l$-the coefficient of the polynomial $p_{T_F,u, k}$, where $l \leq k$. Integrating the coefficients\footnote{The map $u \mapsto b(u; l,k)$ is continuous in $u$ due to the continuity of $\vix_t$} $b(u;l,k)$ from $u = T_F$ to $u = T_F+\Theta$, we denote 
    \[\bar{b}(l,k) = \int_{T_F}^{T_F+\Theta}b(u;l,k) \du.\]
    Then, we find
    \[ \sum_{k=0}^m a_k \int_{T_F}^{T_F+\Theta} \big(p_{T_F,u, k}\big)(\vix_{T_F}) \,\du =\sum_{k=0}^m a_k\sum_{l=0}^k \bar{b}(l,k) \vix_{T_F}^l. \]
    For this expression to be equal to $\Theta\vix_{T_F}^2$, we obtain a system of $m+1$ equations, where
    \[\sum_{k=0}^m a_k\bar{b}(l,k) = \begin{cases}
        \Theta, \quad \text{ if } l = 2,\\
        0, \quad \text{ otherwise. }
    \end{cases}\]
    This is a linear system $Ax = b$ of $m+1$ equations and $m+1$ unknowns. If the matrix $A$ of the linear system is invertible, there is a unique solution. If the matrix is not full-ranked, we have to ensure that a solution exists. \Cref{lem:tech} ensures that the second-order term $\bar{b}(2,2)$ is non-vanishing. Furthermore $\bar{b}(1,1)$ is trivially non-vanishing since $(X_t)_{t \geq t_0}$ is a positive process, meaning that $b = (0,1,0\dots,0)$ is in the column-space of $A$. This concludes the proof.
\end{proof}
\end{theorem}
\begin{rem}
    Since the solution of $Ax =b$ is unique when $A$ is full-ranked, and since there is always a solution $p \in \pol_2$, it implies that
    \[a_k \neq 0, k > 2 \iff \text{rank}(A) < m+1 .\]
    In other words, coefficients of higher order than $2$ are only non-zero if $A$ is degenerate.
\end{rem}
For polynomial processes we can therefore find terminal curves explicitly from the coefficients of the polynomials, as shown in \Cref{thm:polynomial}. The corresponding terminal curve is constant in time, and fulfills Conditions ii) and iii) of a consistent terminal curve. Since Condition i) is not guaranteed, the terminal curve $\psi_t = p$ where $t \in [T_F, T_F+\Theta]$ is not yet a consistent terminal curve. If $(\vix_t)_{t\geq t_0}$ is time-homogeneous, then the terminal curve remains constant under the backward equation, and therefore condition i) is satisfied. If the process is time-inhomogeneous, we can reason to apply the following $\epsilon$-approximation:
\begin{lem}
\label{lem:epsilon}
    Let $\epsilon > 0$ be fixed. Consider the conditions of \Cref{thm:polynomial} and let $p \in \pol_m$ be the corresponding polynomial. Then, there exists a terminal curve $\bs{\psi}_{T_F}$ fulfilling Conditions i) and ii), such that
    \begin{equation}
\norm{\Theta \vix_{T_F}^2 - \mathbb{E}\!\left[\int_{T_F}^{T_F+\Theta} \psi_u(\vix_u)\,\du \;\middle|\; \mathcal{F}_{T_F}\right]}_2 < \epsilon.
    \end{equation}
    \begin{proof}
        See \ref{app:proof_epsilon}
    \end{proof}
    \end{lem}
As a concrete illustration, we consider a second-order polynomial process that is time-homogeneous:
\begin{example}[Brennan--Schwartz Dynamics]
\label{ex:brennan-schwartz}
We consider the following VIX dynamics in a VDV model:
\begin{equation}
\label{eq:brennan-schwartz}
    \d \vix_t = k(\theta -  \vix_t)\,\dt + \sigma \vix_t \, \d W_t,
\end{equation}
where $k,\theta,\sigma$ are fixed constants, such that $k \neq \sigma^2, 2 k\neq \sigma^2 $. This diffusion is a mean-reverting lognormal process often referred to as a \emph{Brennan-Schwartz-type diffusion}~\cite{brennan1980conditional} in the interest-rate literature. The model is time-homogeneous, meaning that the coefficients of the polynomial only depend on the difference $\tau := u - t$. We derive the moments using It\^o's lemma:
\begin{align*}
    \E\left[\vix_u \mid \vix_{t} = x\right]
    &= \e^{-k\tau}\, x \;+\; \theta\big(1-\e^{-k\tau}\big)
    := b(u;\, 1,1)\, x + b(u;\, 0,1),\\
    \E\left[\vix_u^2 \mid \vix_{t} = x\right]
    &= \e^{-(2k-\sigma^2)\tau}\, x^2
    \;+\; \frac{2k\theta}{k-\sigma^2}\Big(\e^{-k\tau} - \e^{-(2k-\sigma^2)\tau}\Big) x \\
    &\quad + \theta^2\left[\frac{2k}{2k-\sigma^2} - \frac{2k}{k-\sigma^2}\e^{-k\tau} + \frac{2k^2}{(k-\sigma^2)(2k-\sigma^2)}\e^{-(2k-\sigma^2)\tau}\right] \\
    &:= b(u;\, 2,2)\, x^2 + b(u;\, 1,2)\, x + b(u;\, 0,2) \nonumber
\end{align*}
We integrate these coefficients over $\tau \in [0,\Theta]$ and write
\begin{equation*}
    \bar b(l,k) := \int_0^\Theta b(\tau; l,k)\,\d\tau, \qquad l \le k \le 2.
\end{equation*}
We now find the coefficients $a_0,a_1,a_2\in\R$ such that
\begin{equation*}
    \psi_t(x) :=  a_0 + a_1x + a_2x^2
\end{equation*}
is consistent at $T_F$ in the sense of \Cref{eq:consistency-equation}. To do so, we solve the linear system of equations
\begin{align*}
    \Theta &= a_2 \, \bar b(2,2),\\
     0 &= a_2  \,\bar b(1,2) + a_1  \,\bar b(1,1),\\
     0 &= a_2  \,\bar b(0,2) + a_1  \,\bar b(0,1) + a_0\Theta.
\end{align*}
\end{example}

\subsection{Approximating Terminal Curves for General Processes}
\label{subsec:general_processes}
The polynomial dynamics of the VIX process in the previous section allowed us to easily specify a terminal curve. Since the general model of \Cref{eq:general_vix} is generally neither polynomial nor time-homogeneous, we cannot apply the exact methodology. The conditional expectations are no longer polynomial functions, and hence we cannot eliminate the non-quadratic terms. Nevertheless, in this subsection, we utilize the insight from the polynomial processes to derive an approximate solution of a consistent terminal curve in the general model given by \Cref{eq:general_vix}. 

For the approximation, we also employ a time-constant polynomial coupling function $\psi_t = p \in \pol_m$ over the terminal interval. To derive the coefficients of $p$, we first project the conditional expectations, which are no longer necessarily polynomial functions, onto the polynomial subspace $\pol_m$ for some $m \geq 2$. 
Let $M_t^k(s; x)$ denote the conditional $k$-th moment of $\vix_s$ given $\vix_t = x$, and let $M_t^k(x)$ be its integral from $s = t$ to $s = t + \Theta$:
\[M_t^k(s; x) := \E\left[\vix^k_s \mid \vix_t = x \right], \quad \bar{M}_t^k(x) := \int_t^{t+\Theta} M_t^k(s; x) \ds.\]
We compute the integrated function $\bar{M}_t^k(x)$ at $t = T_F$, and then approximate the function with a polynomial $p_k \in \pol_m$, such that $p_k \approx \bar{M}_t^k(x)$. After projecting each moment $k \leq m$, we can then apply the same methodology as before and solve the linear system to determine $a_0,a_1,\dots,a_m$ of $p$ to obtain our consistent terminal curve.

It remains to define how to compute $M_t^k(s; x)$ or $\bar{M}_t^k(x)$ at $t = T_F$, and how to approximate the function with a polynomial. Since the $(\vix_t)_{t \geq t_0}$ is an It\^o process fulfilling linear growth conditions, we can utilize the Feynman-Kac theorem and solve the partial differential equation over $[0,\Theta]$:
\begin{equation}
\label{eq:fc}
(\partial_\tau + \mathcal{L}_\vix) M_t^k(t+\tau; x) = 0, \qquad M_t^k(t+\Theta; x) = x^k,
\end{equation}
with $\mathcal{L}_\vix M_t^k(s; x) = \mu(s,x) \partial_x  M_t^k(s; x) + \frac{1}{2} \sigma^2(s, x) \partial_{xx} M_t^k(s; x)$. This yields $ M_t^k(t + \tau; x)$ over the interval $ \tau \in [0,\Theta]$, after which we can integrate to obtain $\bar{M}_t^k(x)$. 

For the projection on the polynomial space to determine $p_k$, we solve the optimization problem by finding the closest $p \in \pol_k$ under the $L^2$-norm. Practically, it makes sense to apply a weighting function to the optimization problem, since we expect $\bar{M}_t^k(x)$ to have a leading $k$-th order term:
\begin{equation}
\label{eq:min}
    p_k := \arginf_{q \in \pol_k} \norm{ \frac{1}{x^k}\left({q - \bar{M}_t^k(.) }\right)}_2.
\end{equation}
Given the optimal polynomial, we extract all required coefficients to compute $a_0,a_1,\dots,a_m$  of the polynomial $p$. This yields our definition of the terminal curve $\bs{\psi}_{T_F}$ that fulfills Condition ii) and approximates Condition iii). For theoretical completeness we apply the $\epsilon$-approximation scheme from \Cref{lem:epsilon}
\subsection{Deriving the coupling function on $[t_0, T_F)$}
\label{subsec:backward}
After determining a consistent terminal curve, the model is completed by constructing $\psi_t$ at all earlier times $t \in [t_0,T_F)$. The main approach to propagating the terminal condition backward to $t_0$ relies on the backward relation \Cref{eq:backwards_main}, which requires evaluating the conditional expectation
\[ 
\E\left[ \psi_{t+\Theta}(\vix_{t+\Theta}) \,\middle|\, \vix_t=x \right],
\]
for a sufficiently fine time grid. 

For certain specifications of $(\vix_t)_{t\geq t_0}$, this conditional expectation is available in closed form. This occurs, for instance, when the VIX process is a martingale or belongs to a suitable class of polynomial processes, as in the examples above. When no analytical expression is available, the conditional expectation must be solved numerically. Although Least-Squares Monte Carlo (LSMC)~\cite{longstaff2001valuing} can estimate this expectation from simulated paths, it is often difficult to implement effectively. Selecting suitable basis functions to capture the nonlinear state dependence is challenging, particularly in high local-volatility regimes, and poor choices introduce projection errors that compound during backward stepping. 

Consequently, solving the Feynman-Kac backward PDE directly is the preferred methodology. We find $\phi_s(x) = \E\left[ \psi_{t+\Theta}(\vix_{t+\Theta}) \,\middle|\, \vix_s=x \right]$ on $s \in [t, t+\Theta]$ as a solution to the backward problem \begin{equation}
\label{eq:fc_for_psi}
(\partial_s + \mathcal{L}_\vix) \phi_s(x) = 0, \qquad  \phi_{t+\Theta}(x) = \psi_{t+\Theta}(x),
\end{equation}
with $\mathcal{L}_\vix\phi_s(x) = \mu(s,x) \partial_x\phi_s(x) + \frac{1}{2} \sigma^2(s, x) \partial_{xx} \phi_s(x)$. In practice, the backward propagation is evaluated over a discrete time grid
\[ 
\mathcal{T} = \{t_0, t_1, \ldots, t_N = T_F\}, 
\]
yielding a sequence of time-slice functions $\left\{ \psi_{t_n} \colon t_n \in \mathcal{T} \right\}$. To extend the coupling function continuously across the entire interval $[t_0, T_F]$, we perform linear interpolation between adjacent grid points. Specifically, for any $t \in [t_n, t_{n+1}]$, the coupling function is defined as
\[ 
\psi_t(x) = \frac{t_{n+1}-t}{t_{n+1}-t_n}\, \psi_{t_n}(x) + \frac{t-t_n}{t_{n+1}-t_n}\, \psi_{t_{n+1}}(x). 
\]
To verify the numerical accuracy of this continuous-time representation and assess consistency ex post, we study and compare the final samples of $\Theta \vix_t^2$ with numerical estimates of $\E\left[ \int_t^{t+\Theta} \psi_u(\vix_u)\,\mathrm{d}u \,\middle|\, \vix_t \right]$ over the relevant times and states using the following procedure:
\begin{prcd}[Non-parametric Consistency Analysis]
\label{prcd:npca}
The nonparametric consistency analysis analyzes the consistency in a pair of samples of $(\vix_t, v_t)_{t \geq t_0}$. The procedure estimates conditional means of realized volatility $v_t$ and compares the values to the current VIX $\vix_t$. For this procedure, we generate samples of $(\vix_t)_{t \geq t_0}$ on $[t_0, T_F + \Theta]$ and apply the coupling function to obtain, along each path, the corresponding samples of the instantaneous variance $(v_t)_{t \geq t_0}$. For a fixed evaluation time $t \in [t_0, T_F]$, we then record along each path the realized time-average
\begin{equation*}
    \bar{v}_t := \frac{1}{\Theta} \int_t^{t+\Theta} v_u \, \du,
\end{equation*}
approximated by trapezoidal quadrature on a uniform discretization of $[t, t+\Theta]$, together with the level $x := \vix_t$ observed at time $t$. We then partition the VIX paths into equally-spaced bins in log-space. Within each bin, we compute the empirical mean of the observed realized time-averaged variance
\begin{equation*}
    \widehat m(x_b) := \frac{1}{|\mathcal I_b|}\sum_{i \in \mathcal I_b} y_i,
\end{equation*}
where $\mathcal I_b$ denotes the set of path indices falling in bin $b$ and $x_b$ is the bin midpoint, as a nonparametric estimator of
\begin{equation*}
    \widehat m(x_b) \approx \frac{1}{\Theta}\E\left[\int_t^{t+\Theta} v_u \, \du \,\middle|\, \vix_t = x_b\right].
\end{equation*}
Under the consistency relation
\begin{equation*}
    x^2 = \frac{1}{\Theta}\E\left[\int_t^{t+\Theta} v_u \, \du \,\middle|\, \vix_t = x\right],
\end{equation*}
plotting $\widehat m(x_b)$ against $x_b^2$ across bins should then produce a scatter plot centered at the 45-degree line.
\end{prcd}
This concludes the construction of the coupling functions for general processes. We summarize the process in algorithmic form in \ref{app:algo}.
\subsection{Heston Model in \mname~Framework}
\label{sec:heston}
The \mname~model is in essence a stochastic volatility model, although the process $(v_t)_{t \geq t_0}$ is not modeled explicitly. To illustrate this fact, we express a well-established stochastic volatility model in the setting of the \mname~framework. The Heston model has the following $\Q$-dynamics for the instantaneous variance process:
\begin{equation}
    \label{eq:Heston}
    \dv_t = \kappa (\theta - v_t) + \xi \sqrt{v_t} \dW.
\end{equation}
Its first conditional moment is given analytically by
\[\E[v_s | v_t ] = \theta + (v_t - \theta) \e^{-\kappa (s-t)}, \quad t<s.\] 
For that reason, the VIX process $(\vix_t)_{t\geq t_0}$ can be derived explicitly as a function of $v_t$. Integrating the right-hand side of the consistency equation (\ref{eq:consistency-equation}), and setting it equal to the left-hand side, we find:
\[  \Theta\vix_t^2 = \mathbb{E}\!\left[\int_t^{t+\Theta} v_u\,\du \;\middle|\; \mathcal{F}_t\right] = \Theta^2(1-B) + \Theta B v_t,\]
with $B = \frac{1- \e^{-\Theta \kappa}}{\Theta \kappa}$.  Solving for $v_t$, we get an explicit expression for a coupling function.
\[v_t = \psi_t(X_t), \quad \psi_t(x) = \frac{x^2}{B} + \Theta(1 - \frac{1}{B})\]
Using It\^os lemma on the inverse map of $\psi_t$, we can then derive the explicit dynamics for $(\vix_t)_{t \geq t_0}$ that lead to the Heston model under the \mname~framework. We find
\[
\d\vix_t = \mu(\vix_t) \, \dt + \sigma(\vix_t) \, \dW_t,
\]
where the drift $\mu(\vix_t)$ and diffusion $\sigma(\vix_t)$ are given by:
\[
\mu(\vix_t) = \frac{B\kappa\theta - \kappa\left(\vix_t^2 + \theta(1 - B)\right)}{2\vix_t} - \frac{\xi^2\left(\vix_t^2 - \theta(1 - B)\right)}{8\vix_t^3}, \quad
\sigma(\vix_t) = \frac{\xi}{2\vix_t} \sqrt{\vix_t^2 - \theta(1 - B)}.
\]
In the \mname~model, we normally do not know the coupling function $\psi_t$ \emph{a prior}, since we first specify the dynamics of the VIX process. In the Heston case, we have the coupling function explicitly. Since the Heston-implied VIX process is not polynomial, we can apply the projection scheme to replicate the coupling function and to assess the numerical accuracy of the method. We compute the conditional first and second moments of $(\vix_t)_{t \geq t_0}$ by solving the Feynman-Kac backward equation of \Cref{eq:fc}, and fit the corresponding polynomials. Then, we find the coupling function $\psi_t$ as $p \in \pol_2$ with $p = a_2x^2+a1x+a_0$. For an example of $\kappa = 2, \theta = 0.04$ and $\xi = 0.5$, we find the following values:
\begin{table}[H]
  \centering
  \begin{tabular}{lccc}
    \hline
& $a_2$ & $a_1$ & $a_0$ \\
    \hline
    \textbf{Approximation} & $1.085647$ & $-4.11 \times 10^{-8}$ & $-0.003426$ \\
    \textbf{Exact Values}  & $1.085647$ & $0.000000$             & $-0.007137$ \\
    \hline
  \end{tabular}
  \caption{Comparison of polynomial coefficients in Heston VDV for $\kappa = 2, \theta = 0.04$ and $\xi = 0.5$.}
  \label{tab:coefficients}
\end{table}
The example shows that the polynomial approximation method is able to approximate the true coupling function well.
\section{Local-Volatility-VIX}
\label{sec:lv-vix}
In this section we present the \emph{Local-Volatility-VIX (LV-VIX)} model, an implementation of the \mname~model. In this model, the VIX process follows a \emph{mean-reverting local volatility process}, as defined in~\cite{drimus2013local}. Let $k > 0$ be fixed and let $\theta \colon [t_0,\infty) \to \R_+$ be a positive map. We consider the following model dynamics:
\begin{align}
    \frac{\dS_t}{S_t} &= \sqrt{v_t}\, \dW^{(S)}_t,\label{eq:main_S}\\
    v_t &= \psi_t(\vix_t),\label{eq:main_v}\\
    \d\vix_t &= k \,( \theta_t - \vix_t)\,\dt +  \sigma_{LV}(t,\vix_t)\vix_t\,\dW_t^{(\vix)}.\label{eq:main_vix}
\end{align}
The mean-reverting process $(\vix_t)_{t\geq t_0}$ has parameters $k$, defining the mean-reversion speed, and a long-term mean of $\vix_t$ defined by $\theta_t$. Since $\theta_t$ is a time-dependent function, we use it to calibrate to the VIX future curve (see \ref{app:lvol}). The diffusion function $\sigma_{LV}(t,k)$ is the \emph{local volatility function} of the VIX process, derived following the derivation of Dupire's formula~\cite{Oosterlee_Grzelak_2020}. Since $(\vix_t)_{t\geq t_0}$ is not a discounted martingale, but has a mean-reversion component for the drift, we require an adjustment of Dupire's formula. The derivation of the formula is outlined in~\cite{drimus2013local}. We find
\begin{equation}\label{eq:Dupire-VIX}
\sigma_{LV}^2(T,K)  =\frac{ k(\theta_T - K) \cdot \frac{\partial C}{\partial K} (T,K) + (r + k) \cdot C(T,K) + \frac{\partial C}{\partial T} (T,K)}{\frac{1}{2} K^2 \cdot \frac{\partial^2 C}{\partial K^2} (T,K)},
\end{equation}
where $r$ is the discounting rate and $C(T,K)$ is the price of a call option on the VIX index with strike $K$ and expiry $T$:
\begin{equation}
    C(T,K) :=  \E\left[\e^{-rT}(\vix_T - K)^+\right].
\end{equation}

A key feature of the outlined model is that every parameter is used to calibrate at most one type of market data, making the parameter impact clear and simple. \Cref{tbl:params} lists the target market data for each parameter:
\begin{table}[H] 
\centering 

\begin{tabular}{lll} \hline 
Market Data & Parameters & Type\\
\hline \hline 
VIX Futures & $\theta_t$ & Map $\left([t_0,\infty) \to \R_+ \right)$ \\ 
VIX Options & $\sigma_{\mathrm{LV}}$ & Map $ \left([t_0,\infty) \times \R_+ \to \R_+ \right)$ \\ 
SPX Options & $k, \rho$ & Scalar \\ \hline \end{tabular}
\caption{Market Data and Parameter Overview} 
\label{tbl:params}\end{table}

Using the mean-reverting dynamics $\mu(t,x) = k(\theta_t - x)$ and local volatility $\sigma(t,x) = x \cdot \sigma_{LV}(t,x)$ for the VIX and plugging in the functions, we therefore find the backward equation for $v_t$ over a timestep of $\Theta$:
\begin{equation}
\label{eq:backwards_vix}
    \psi_t(\vix_t) = \E\left[\psi_{t+\Theta}(\vix_{t+\Theta}) \mid \vix_t \right]  - \Theta \left[2\, k\, \vix_t\,(\theta_t - \vix_t) + \sigma^2_{LV}(t,\vix_t)\,\vix_t^2\right], \quad \text{a.s.}
\end{equation}

The local volatility process $(\vix_t)_{t\geq t_0}$ is generally not a polynomial process. For this reason, the polynomial approximation scheme as described in \Cref{subsec:general_processes} is required to estimate a consistent terminal curve $\bs{\psi}_{T_F}$. Then, the coupling function on $t \in [t_0,T_F)$ can be propagated through the backward equation. This requires the computation of $\E[\psi_{t+\Theta}(X_t) \mid X_t]$ by solving the Feynman-Kac backward PDE of [pde]. The positivity of the function $\psi$ can be checked post-hoc and floored should it be required. \Cref{fig:coupling_function} is a visualization of a coupling function derived from an LV-VIX model.
\begin{figure}[H]
    \centering
    \includegraphics[width=0.7\linewidth]{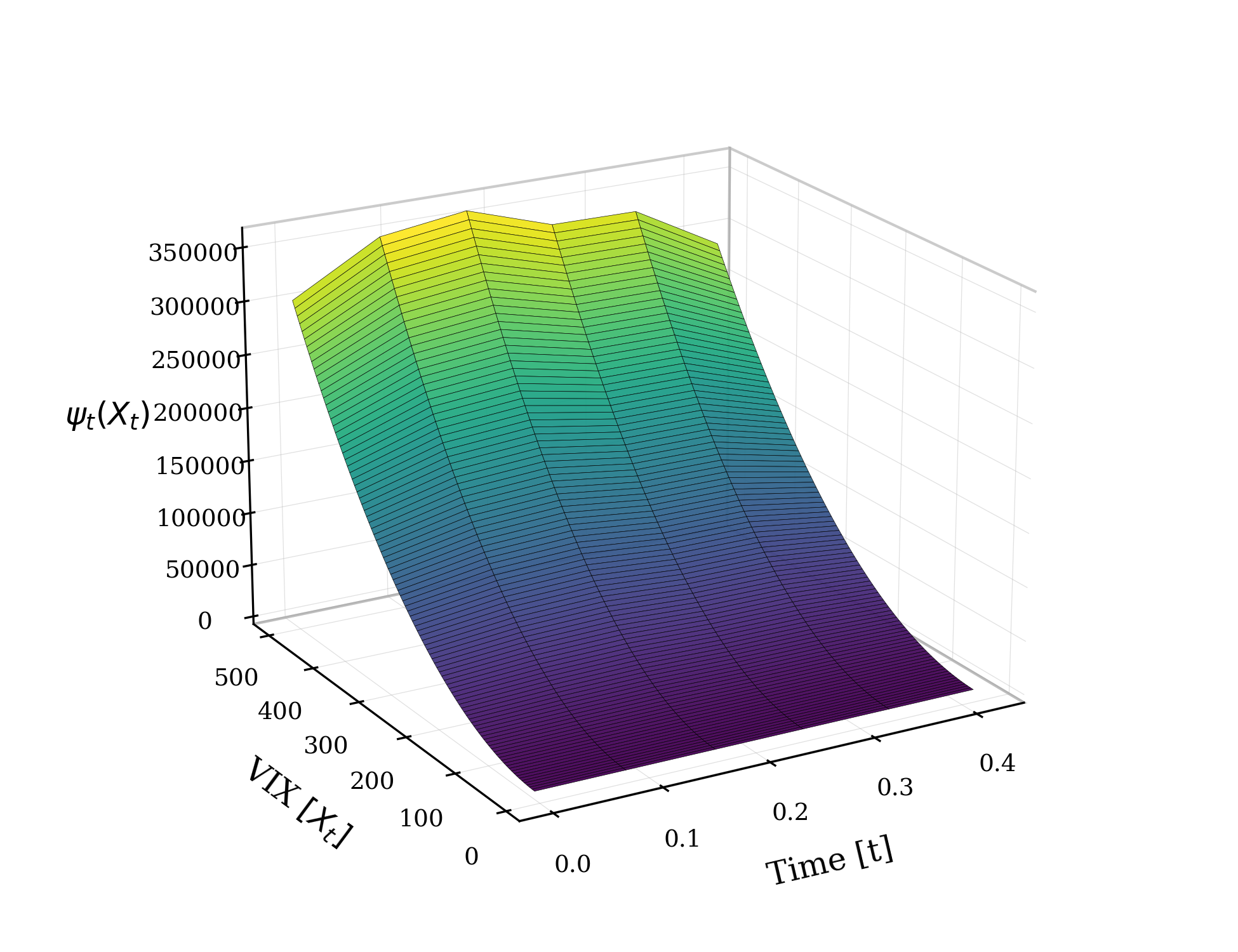}
    \caption{Visualization of a coupling function}
    \label{fig:coupling_function}
\end{figure}
\subsection{Model Parameter Study}
The LV-VIX model has free model parameters $k,\theta_t, \rho$ that need to be specified. Since the future curve of the VIX defines the drift term of the local vol model, we calibrate $\theta_t$ as a function of $k$ and the VIX future curve (see \ref{app:lvol}), according to \Cref{tbl:params}. The remaining parameters $k,\rho$ are used to calibrate the model SPX options to the SPX market. 

A particularly useful feature of the model is that the parameters used for the SPX fitting do not alter the marginal distributional properties of $(\vix_t)_{t\geq t_0}$, and hence can be calibrated independently. 
For instance, the temporal dependence structure of volatility can be controlled through the mean-reversion parameter $k$, while preserving the marginal distributions of the process. More precisely, whenever $k$ is changed, the local-volatility function $\sigma_{LV}$ is recalibrated so that the distribution of $\vix_t$ for every fixed time $t \geq t_0$ remains unchanged. Consequently, modifying $k$ does not affect the model-implied cross-sectional distributions and therefore does not alter the corresponding European option prices. 

While the one-dimensional marginal distribution is preserved, the joint distribution of $\vix_t$ across different times is not. In particular, path-dependent quantities, such as the temporal covariance $\operatorname{Cov}\!\left(\vix_t,\vix_s\right)$ for times $s \neq t$ and, more generally, the persistence of volatility shocks, depend on the choice of $k$. The parameter $k$ therefore provides a direct mechanism for calibrating the autocorrelation structure of volatility independently of its marginal distributions. This distinction is illustrated in \Cref{fig:varying_k_paths}, where we compare simulated trajectories obtained with $k=1$ and $k=12$.
\begin{figure}[H]
    \centering
    \includegraphics[width=\linewidth]{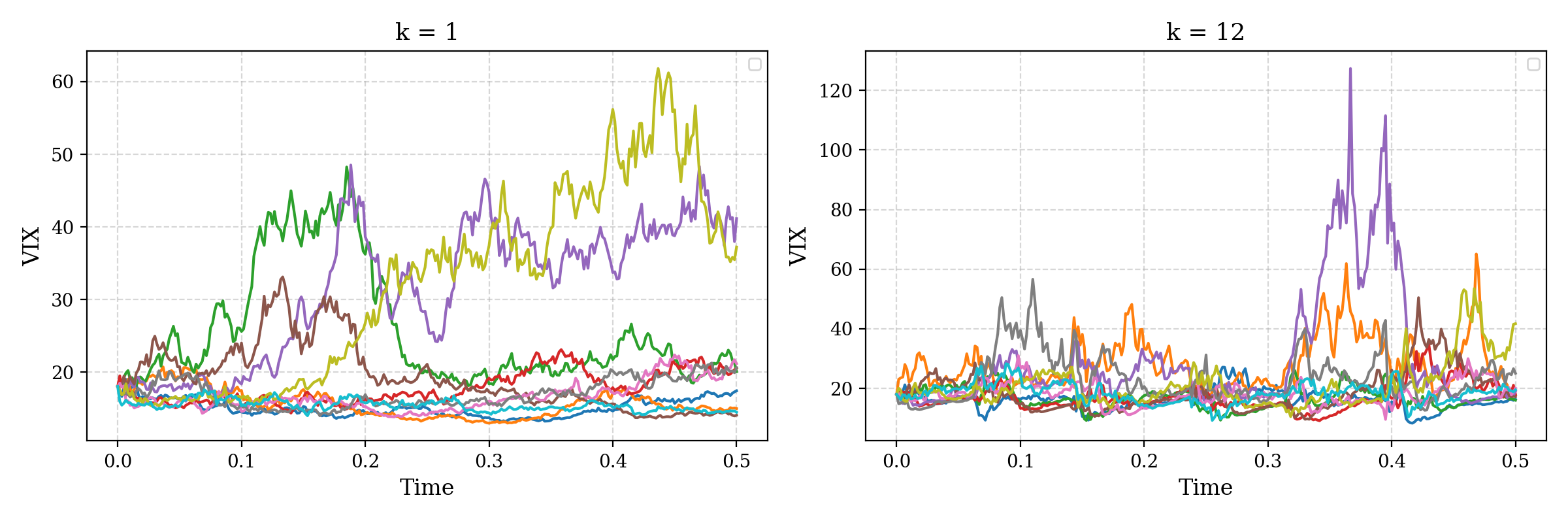}
    \caption{VIX Sample paths for two values of $k$, calibrated to same marginal distributions: Left-hand side: $k = 1$, Right-hand side: $k = 12$.}
    \label{fig:varying_k_paths}
\end{figure}
For $k=1$, the relatively weak mean reversion allows deviations from the long-run mean to persist over longer periods, resulting in smoother and more prolonged excursions. By contrast, for $k=12$, the stronger mean-reverting force pulls the process back toward its long-run mean much more rapidly. At the same time, because the recalibration of $\sigma_{LV}$ preserves the marginal distribution at each time, the same range of volatility levels must still be attained. The simulated paths consequently display more frequent, but less persistent, deviations from the long-run mean. Thus, the parameter $k$ changes how volatility levels are reached and how long they persist, rather than changing the set of marginal distributions generated by the model. \Cref{fig:lv_var_k} shows how the local volatility changes when the parameter $k$ is varied.
\begin{figure}[H]
    \centering
    \includegraphics[width=\linewidth]{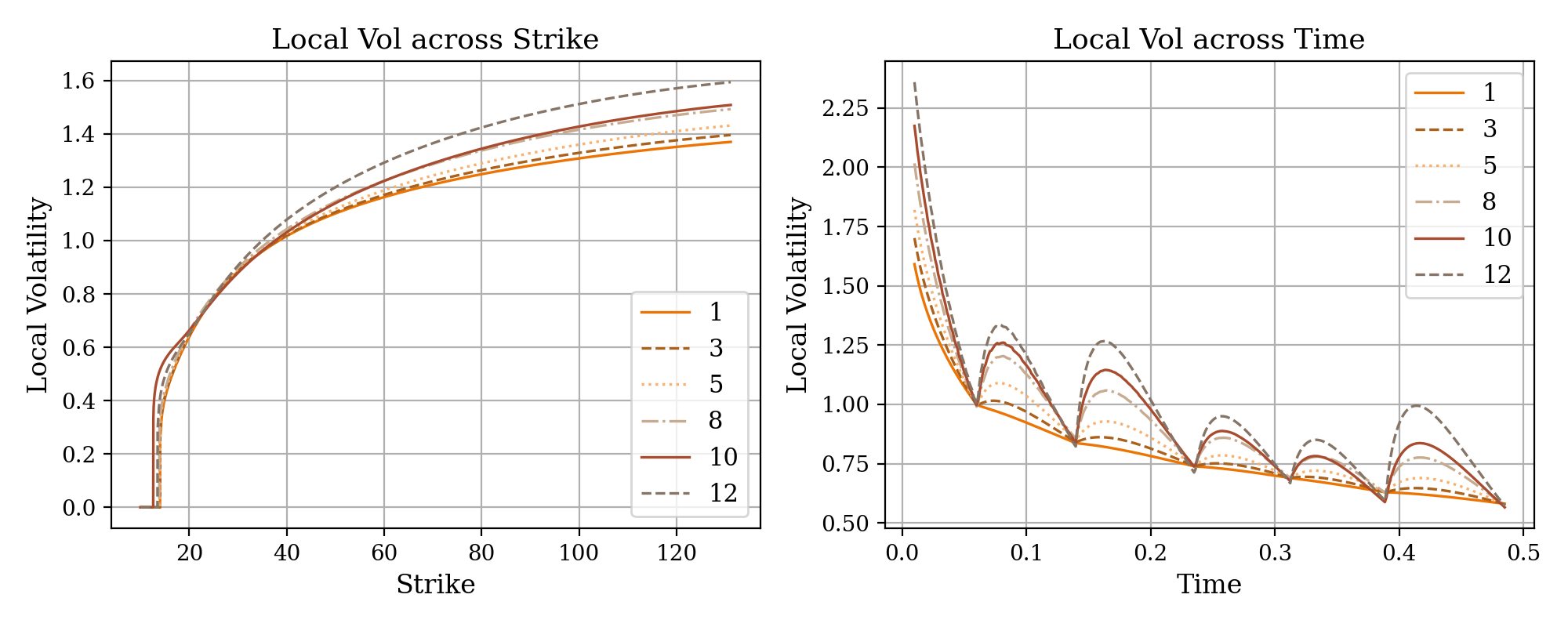}
    \caption{Local vol functions $\sigma_{LV}$ across time and space. Parameter $k\in \{1,3,5,8,10,12\}$ }
    \label{fig:lv_var_k}
\end{figure}
Through the temporal volatility dependency of $k$, and the correlation coefficient $\rho$ that governs the correlation between the stochastic driver of the SPX and the VIX, the LV-VIX model has two parameters that can be used to fit the SPX option market. \Cref{fig:iv_var_k} shows how each parameter affects the SPX implied volatilities.
\begin{figure}[H]
    \centering
    \includegraphics[width=\linewidth]{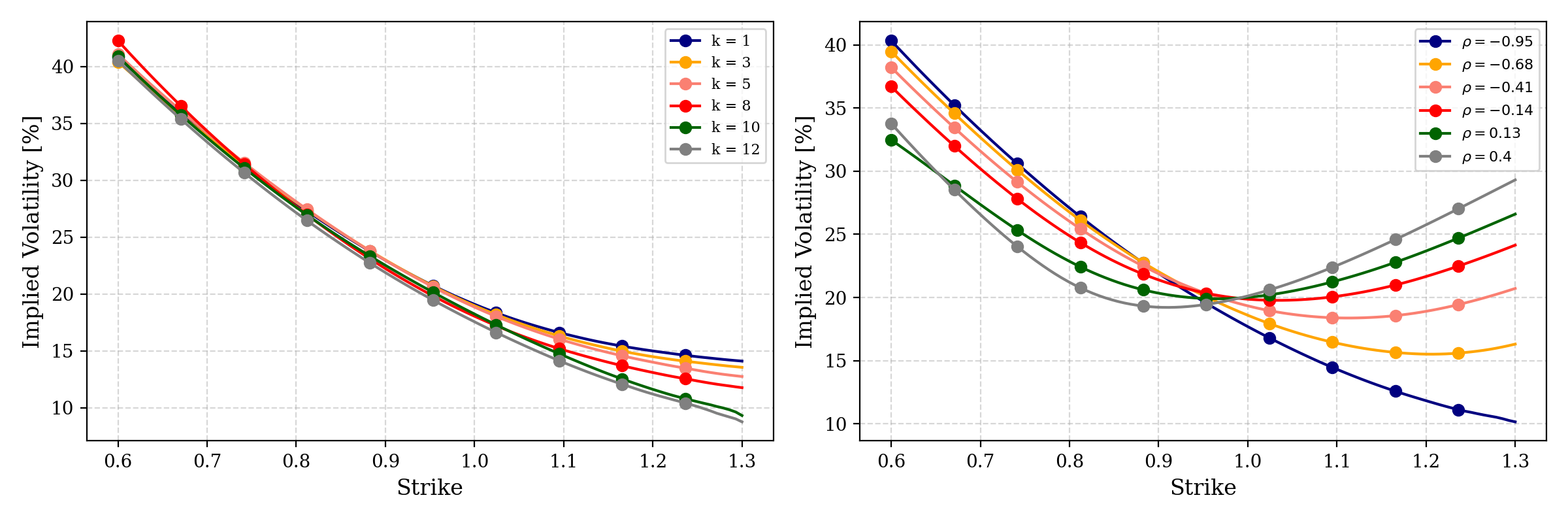}
    \caption{SPX Implied Volatility Shapes for varying $k$ and $\rho$: Left-hand side: $\rho = -0.95$, Right-hand side: $k = 12$ }
    \label{fig:iv_var_k}
\end{figure}
\section{Numerical Experiments}
 \label{sec:num}
 In this section, we apply the LV-VIX model from \Cref{sec:lv-vix} to real market data of the SPX and VIX. We briefly discuss the data processing and cleaning aspects, and then show how the model was able to fit the markets.
 \subsection{Data \& Data Processing}
 The option market bid and ask quotes for the VIX and SPX are retrieved from the provider Thetadata~\cite{ThetaData2026}, and cleaned to exclude illiquid or arbitrage quotes. The snapshot time is the U.S. markets EOD (4:30 PM ET) on April 27th, 2026. We consider strikes that are quoted at a positive bid and positive ask price to ensure only relevant quotes are used for the fit. Put option prices are converted to call option prices using the put-call parity. \Cref{tbl:summary_stats} summarizes the number of quotes and maximum bid-ask (implied volatility) spread $\frac{\text{Ask} - \text{Bid}}{\text{Mid}}$.
 \begin{table}[H]
\centering
\caption{Summary of Option Quotes and Maximum Relative Spreads}
\label{tbl:summary_stats}
\begin{tabular}{lccccc}
\hline
\multicolumn{6}{c}{\textbf{VIX Options (Spot 18.02)}} \\
\hline
\textbf{Maturity} & \textbf{Forward} & \textbf{Min Strike} & \textbf{Max Strike} & \textbf{Quotes} & \textbf{Max \% Spread} \\ \hline
2026-05-19 &20.05 &17 & 55.0 & 37 & 18.92\% \\
2026-06-17 &21.09& 16 & 85.0 & 45 & 20.80\% \\
2026-07-22 & 21.95&16 & 95.0 & 38 & 16.81\% \\
2026-08-19 & 22.20&16 & 95.5 & 38 & 19.87\% \\
2026-09-16 & 22.57&16 & 95.5 & 38 & 21.90\% \\
2026-10-21 & 23&16 & 95.0 & 38 & 19.18\% \\ \hline
\end{tabular}\\
\begin{tabular}{lccccc}
\hline
\multicolumn{6}{c}{\textbf{SPX Options (Spot 7152.72)}} \\
\hline
\textbf{Maturity} & \textbf{Forward} & \textbf{Min Strike} & \textbf{Max Strike} & \textbf{Quotes} & \textbf{Max \% Spread} \\ \hline
2026-05-15 & 7184.96 & 5250& 7640& 410 & 8.82\% \\
2026-06-18 & 7205.15 & 3400 & 8200& 494 & 6.82\% \\
2026-07-17 & 7222.97 & 1800 & 8700& 468 & 6.33\% \\
2026-08-21 & 7244.04 & 1200& 9200& 412 & 6.24\% \\
2026-09-18 & 7259.87 & 1400 & 9800& 367 &6.19\% \\
2026-10-16 & 7278.60 & 1400& 10200 & 304 & 6.06\% \\
\end{tabular}
\end{table}
 For the VIX futures data, we retrieve the daily settlement prices, as provided by CBOE~\cite{CBOEVIXFUTURES}. The settlement prices roughly correspond to EOD market prices. For any expiry date not listed, we use a linear interpolation method to determine the corresponding forward price. 
\subsection{Local Volatility for VIX}
In the first part of the model calibration, we calibrate the local volatility process $(\vix_t)_{t \geq t_0}$. We employ an adaptation of Drimus and Farkas' method~\cite{drimus2013local}, which we outlined in \ref{app:lvol}. The forward curve consists of $M=6$ maturities $t_1,t_2,\dots,t_6$, for which the forward prices $\{F_{X, t_m} \colon  0\leq m \leq M\}$ are listed in \Cref{tbl:summary_stats}. We construct the forward curve $F_X \colon [t_0,\infty )$ such that $F_X(t_0)$ is the spot price and $F_{X,m} = F_X(t_m)$. For the parameter $k$ we use $k=12$. The resulting mean-reversion parameters are then derived as $\theta_1, \theta_2, \dots,\theta_5  = 23.33,22.28,22.68,22.5,23,23.37$. \Cref{fig:future_curve} shows the corresponding curve and parameters for $\theta_m$ for $k=12$ as well as $k \in \{5,8\}$ as comparison.
\begin{figure}[H]
    \centering
    \includegraphics[width=\linewidth]{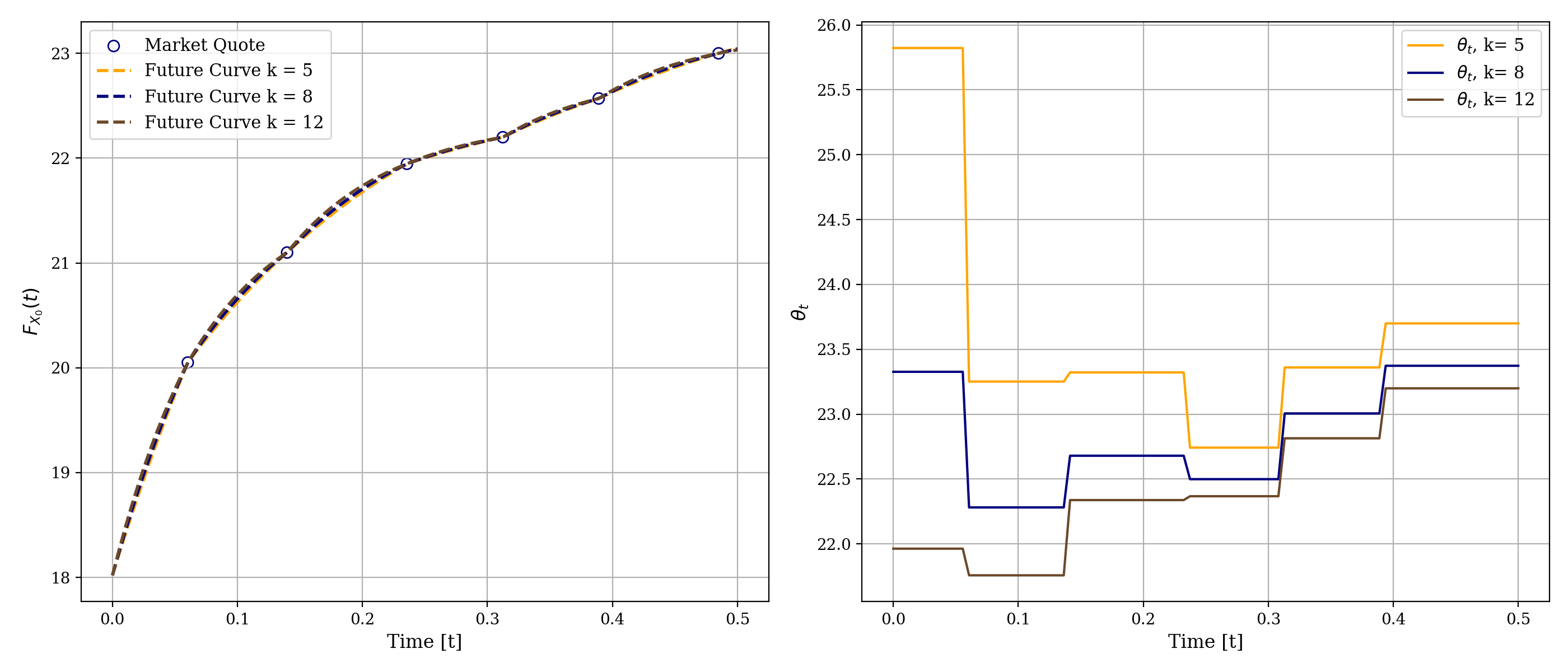}
    \caption{Calibrated VIX Future curve and corresponding $\theta_t$ on April 27, 2026. Parameter $k\in \{5,8,12\}$ }
    \label{fig:future_curve}
\end{figure}
The interpolation functions $\nu_m(.)$ (see \ref{app:lvol}, \cite{drimus2013local}) are fitted on mid market implied volatilities. Contrary to Drimus and Farkas, we do not use a parametric form of $\nu_m(.)$, but a step-function approach as described in \ref{app:lvol}. This yields a twice-differentiable option price surface for $t \in [t_0,t_6]$ and any desired strike $K$. From the price surface we can derive the implied volatility surface and the derivative of the implied volatility surface to compute \Cref{eq:app_lv-vol}, which is equivalent to \Cref{eq:Dupire-VIX}. 
The resulting local volatility model after calibration replicates the smoothed VIX options market by design. We simulate Monte Carlo paths and price model options to compare to the market quotes. \Cref{fig:vix_fit} shows that the model fit of the options lies largely within the bid-ask spread of the VIX options.
\begin{figure}[H]
    \centering
    \includegraphics[width=\linewidth]{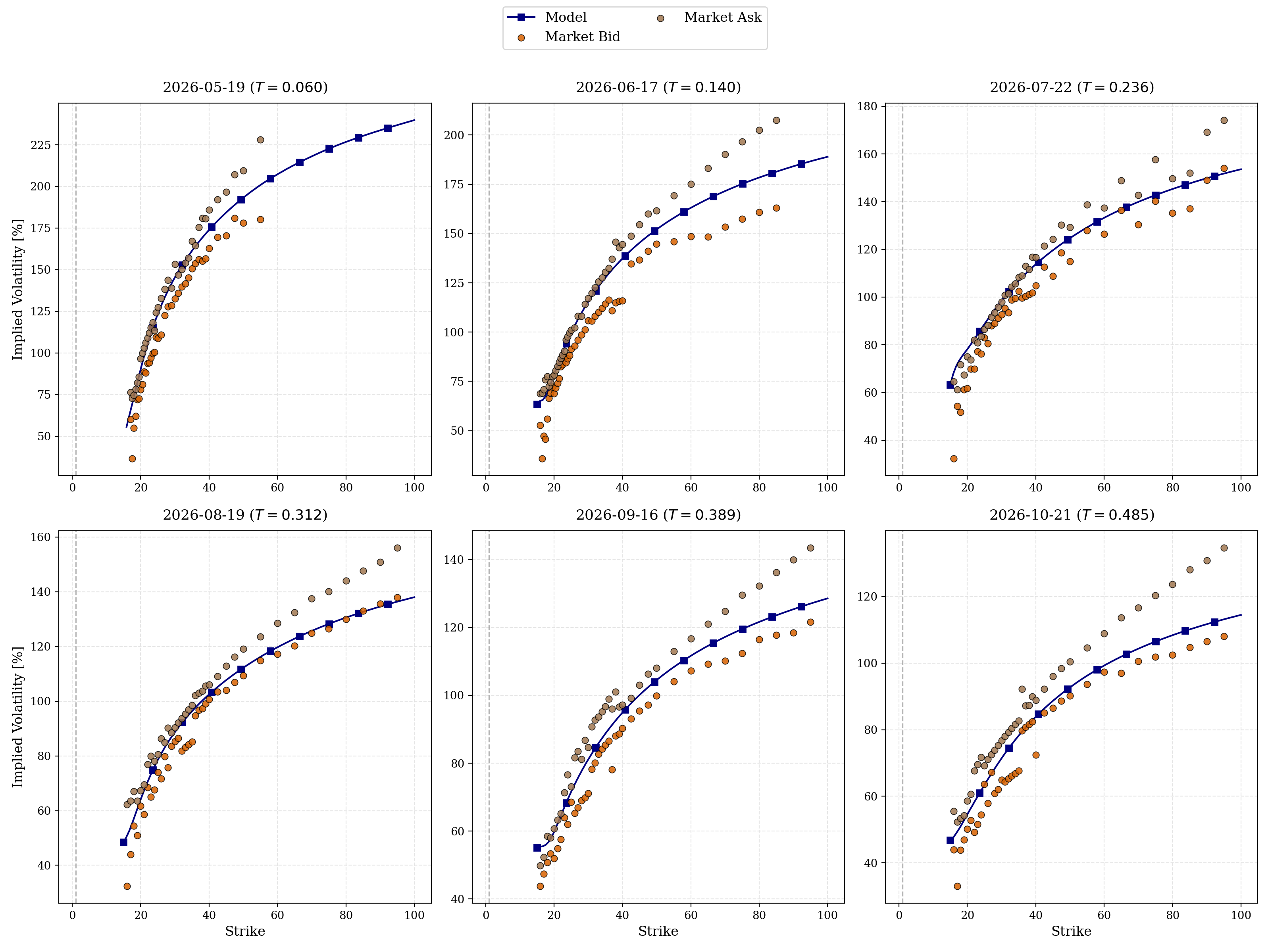}
    \caption{VIX Options Model to Market Fit.}
    \label{fig:vix_fit}
\end{figure}
\subsection{Instantaneous Variance of the SPX}
In the final step of the model calibration, we derive the coupling function $\psi_t$ on the interval $[t_0,T_F+\Theta]$ and then fit the final parameter $\rho$ to fit the SPX option skew to the market data. Since the calibration uses maturities up to October 2026, we choose $[t_0, T_F+\Theta] = [0,0.5]$. 

For the terminal curve, we derive a second-order polynomial $p \in \pol_2$ from the numerically computed values for $\bar{M}_{T_F}^1$ and $\bar{M}_{T_F}^2$. We find $p(x) = a_0 + a_1 x + a_2 x^2$ with \[a_2 \approx 1.16, a_1 \approx -4.19, a_0 \approx -101.37.\] We then derive the coupling function $\psi$ on the entire interval $[t_0,T_F+\Theta]$ with the backward equation (\ref{eq:backwards_main}), repeatedly solving the conditional expectation using a numerical PDE solver. To empirically validate the consistency on the entire interval, we analyze the scatter plot for the nonparametric consistency analysis of \Cref{prcd:npca}. \Cref{fig:npca} shows that the coupling function produces consistency across two arbitrarily selected times.
\begin{figure}[H]
    \centering
    \includegraphics[width=0.7\linewidth]{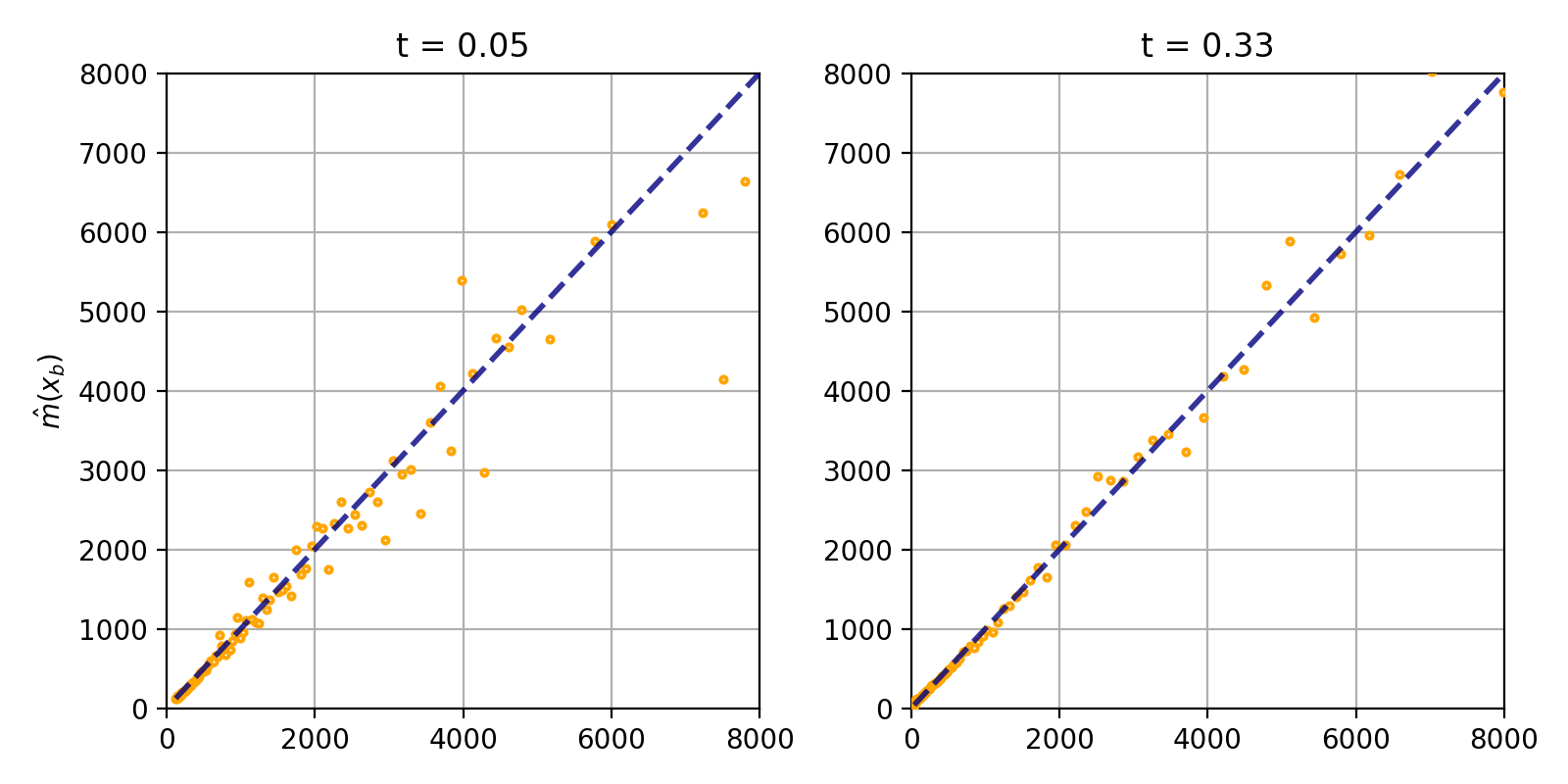}
    \caption{Nonparametric Consistency Analysis: The VIX and the instantaneous variance are consistently calibrated.}
    \label{fig:npca}
\end{figure}
Finally, we compute SPX samples from the instantaneous volatility paths and compute SPX option prices. We calibrate the correlation coefficient to $\rho=-0.95$. The fit of the model to the market implied volatilities is shown in \Cref{fig:spx fit}. The model price is largely in line with the market bid-ask prices observed at the time, proofing the \mname~model as valid joint SPX/VIX framework.
\begin{figure}[H]
    \centering
    \includegraphics[width=\linewidth]{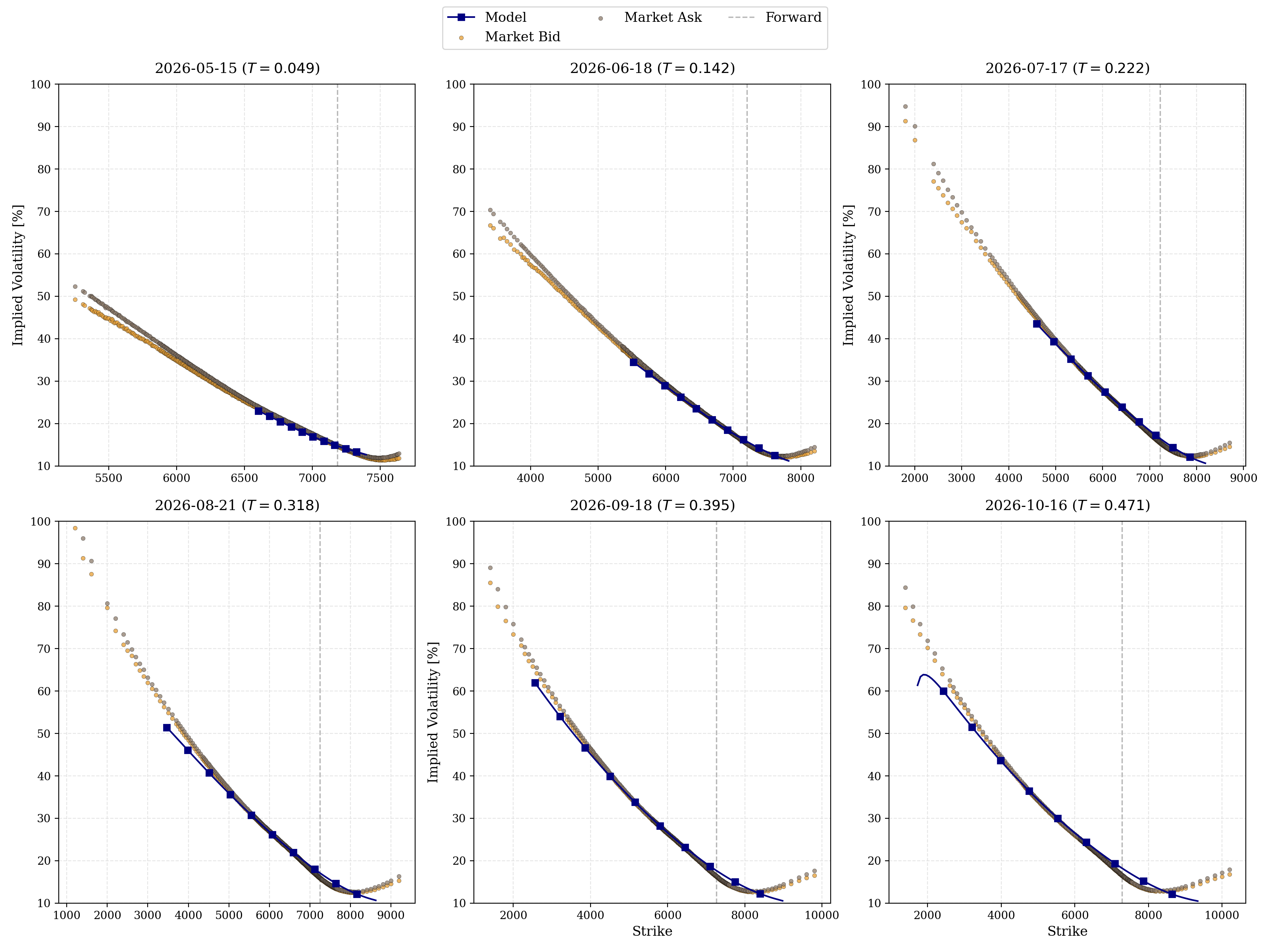}
    \caption{SPX Options Model to Market Fit.}
    \label{fig:spx fit}
\end{figure}
We note that the largest discrepancy between market and model occurs for out-the-money call options. A possible explanation is the simplification of a linear correlation structure between SPX and VIX drivers. As existing research suggests~\cite{papanicolaou2014regime}, the correlation between the VIX and SPX is more negative for negative SPX returns than for positive SPX returns. The linear correlation coefficient $\rho$ does not capture this intricate feature, hence overpricing out-the-money call options.
\subsection{Computational Times}
Finally, we briefly comment on the computational effort of calibrating the local volatility function and deriving the coupling function. The numerical efforts requiring heavier computations can be classified into 4 stages of the calibration. \Cref{tab:calibration-time} shows the 4 stages and the corresponding approximated required times.
\begin{table}[H]
    \centering
    \begin{tabular}{llll}
        \toprule
        \textbf{Stage} & \textbf{Problem} & \textbf{Timing} & \textbf{Equation} \\
        \midrule
        Calibration of $\nu_i$              & Linear system                    & $<$1s    & \Cref{eq:linearsystem} \\
        Calibration of local volatility    & Linear system                    & $<$1min  & \Cref{eq:linearsystem} \\
        Calibration of terminal curve      & Partial differential equation  & $<$30s   & \Cref{eq:fc} \\
        Calibration of coupling function   & Partial differential equation  & $<$1min  & \Cref{eq:fc_for_psi} \\
        \bottomrule
    \end{tabular}
    \caption{Computational cost of each calibration stage.}
    \label{tab:calibration-time}
\end{table}
The main computational effort for the coupling function concerns solving equations (\ref{eq:fc}) and (\ref{eq:fc_for_psi}). The speed of these equations depends heavily on the smoothness of the diffusion function $\sigma_{LV}$, since a noisy diffusion function requires a dense grid to solve it accurately. For the derivation of the coupling function in \Cref{fig:coupling_function}, we solved a total of 7 Feynman-Kac problems, each over a time-span of $\Theta$.
\section{Conclusions}
A joint SPX-VIX framework is a stochastic model for the S\&P 500 that is able to price SPX options, VIX options, and VIX futures according to the market. Defining and calibrating such a joint model requires high modeling flexibility to produce realistic simulations of both the volatility and price processes of the SPX. The high dimensionality of the available market data stemming from three separate markets makes parametric model calibration extremely difficult, forcing modelers to opt for nonparametric alternatives.

In this paper, we introduced the \mname~model, a stochastic volatility model for the SPX, where the volatility process emerges as a latent factor of an explicitly modeled VIX process. The two processes of the VIX and SPX are coupled with a coupling function, which expresses the instantaneous variance of the SPX as a function of the VIX. We derived an analytically consistent coupling function for the case when the VIX process $(\vix_t)_{t \geq t_0}$ follows a geometric Brownian motion, and more generally for polynomial processes. We then show how to derive an approximate consistent coupling function for the general model, where $(\vix_t)_{t \geq t_0}$ is a time-inhomogeneous non-polynomial process. 
A key feature of the \mname~model is that it has distinct calibration possibilities for each relevant market, separating the calibration process into three steps and avoiding tedious parameter searches.

In \Cref{sec:num} we introduced the Local Volatility-VIX model, an implementation of a VDV model capable of producing realistic VIX patterns. In this model, the VIX process follows a mean-reverting local volatility process that allows calibration to both VIX futures and VIX options following the work of Drimus and Farkas~\cite{drimus2013local}. We then applied the Local-Volatility VIX model to market data from April 2026. The real-data experiment showed three achievements: Firstly, the VIX process $(\vix_t)_{t \geq t_0}$ is flexible enough to be calibrated to VIX futures and VIX options (\Cref{fig:future_curve,fig:vix_fit}). Secondly, the derived instantaneous SPX variance $(v_t)_{t \geq t_0}$ through the coupling function is accurate enough to provide consistency between the SPX and the VIX across the entire time $[t_0,T_F+\Theta]$ (\Cref{fig:npca}). Lastly, by setting the final model parameter $\rho$, the model can be calibrated to the SPX option market, providing a good model fit across all maturities (\Cref{fig:spx fit}). The experiments show that a realistic stochastic model for the VIX, derived from the VIX option and futures market can explain a substantial part of the volatility dynamics of the SPX. 
\label{sec:conclusion}
\bibliographystyle{abbrvnat}
\bibliography{bib}

\appendix
\section{Local Volatility with Mean-Reversion}
\label{app:lvol}
The local volatility model is a well-known stochastic model for pricing derivatives. In its canonical form, the model describes a process $(S_t)_{t>t_0}$ given
\[\dS_t = r S_t \dt + \sigma_{LV}(t,S_t) S_t\dW_t,\]
where $r$ is a fixed rate describing the constant drift to the forward price of $S_t$. The model is easily calibrated to any options market by a closed-form of $\sigma_{LV}$. It is assumed by the model that $S_t$ is a discounted martingale under the risk-neutral measure. Under the fundamental theorem of asset pricing, this is a no-arbitrage condition for any tradable asset. The VIX process $(\vix_t)_{t\geq t_0}$, as defined in \Cref{eq:main_vix} is not a tradable asset, as there is no market to buy or sell the VIX spot price. Instead, the VIX is traded with future contracts. This future curve describes the expected future spot price of the VIX under the risk-neutral measure. For this reason, $(\vix_t)_{t\geq t_0}$ is generally not a discounted martingale, violating a key requirement of the canonical local volatility model. Nevertheless, we can derive a similar local volatility framework for non-martingale processes when the drift term fulfills certain properties, as shown by Drimus and Farkas~\cite{drimus2013local}. The authors introduce a model with mean-reverting drift $ k( \theta_t - \vix_t) \dt$ for the VIX, and show how it can be calibrated to both the VIX future and options market. As this framework is central to our work, we include a brief recap of it here. Here we will assume that $(\vix_t)_{t\geq t_0 }$ follows the stochastic differential equation
\[\d\vix_t = k( \theta_t - \vix_t) \dt + \sigma_{LV}(t,\vix_t) \vix_t \dW_t,\]
for a Brownian motion $W_t$ and parameters as described in \Cref{sec:model}.
\subsection{The future curve}
A key feature of the process $(\vix_t)$ is that it allows a global fit to the future curve. The future, defined as $F_{\vix_t}(T):= \E[X_T | \F_t] $, is the mean of the future spot, given the current information.  Using It\^o's lemma on $\e^{k t}\cdot \vix_t$, we find
\[ \e^{k(T-t)} X_T = \vix_t + k\int_{t}^T \e^{k(u-t)} \theta_u \du + \int_{t}^T \e^{k(u-t)} \sigma_{LV}(u,\vix_u) \vix_u \dW_u.\]
Taking the conditional expectation with respect to $F_t$, we then find
\[F_{\vix_t}(T) = \vix_t \e^{-k(T-t)} +  k\e^{-k(T-t)}\int_{t}^T \e^{k(u-t)} \theta_u \du.\]
Suppose now that for time $t = 0$, we obtain a grid of $N$ future points $\mathcal{FC} = \{F_{X_0}(T_1),F_{X_0}(T_2), \dots, F_{X_0}(T_N)\}$. If we fix the parameter $k > 0$, we aim to determine the map $\theta_t$, such that the model calibrates to $\mathcal{FC}$. We look for a piecewise constant map for $\theta_t$ with step points at $T_1,T_2,\dots,T_N$. The first point $F_{X_0}(T_1)$ is found by assuming $\theta_t$ to be constant on $[0,T_1]$, after which we obtain:
\[\theta_{t} = \frac{F_{X_0}(T_1) \e^{kT_1} - X_0}{\e^{kT_1} - 1} , \quad t \in [0,T_1].\]
    For any $n \leq N$, we denote the constant map between $(T_{n-1}, T_n]$ as $\theta_n$. Given the map on $[0,T_1]$, we can then determine the map $\theta_2$ on $(T_1,T_2]$ given by $F_{X_0}(T_2)$. This procedure is repeated for all $n \leq N$, for which we then obtain:
\[\theta_{n} = \frac{F_{X_0}(T_n)\e^{kT_n} - X_0 - \sum_{i=1}^{n-1} \theta_i\left(\e^{kT_i} - \e^{kT_{i-1}}\right)}{\e^{kT_n} - \e^{kT_{n-1}}} , \quad t \in [T_{n-1},T_n].\]
In \Cref{fig:future_curve} we show the calibrated future curve and its corresponding $\theta$-function on April 27th 2026 for $k  \in \{5,8,12\}$. The plot shows that $k$ affects the curvature of the future curve between the market quotes.

\subsection{The Local Volatility Function}
Having determined the parameter $k$ and its corresponding step-function $\theta_t$, we calibrate process $(\vix_t)_{t\geq t_0}$ to its options market. The derivation of the local volatility function $\sigma_{LV}$ for the mean-reversion model follows the same arguments as the derivation of the canonical local volatility model for discounted martingales from the definition of the call option price as an integral over the risk-neutral density. The key requirement for the mean-reversion term is that it is linear in the state variable $\vix_t$, such that its second-order derivative vanishes. We refer to~\cite{Oosterlee_Grzelak_2020} for a derivation of the canonical model, and to~\cite{drimus2013local} for a derivation of the mean-reverting adjustment. The function is given by
\begin{equation}
\label{eq:app_lv}
\sigma_{LV}^2(T,K)  =\frac{ k(\theta_T - K) \cdot \frac{\partial C}{\partial K} (T,K) + (r + k) \cdot C(T,K) + \frac{\partial C}{\partial T} (T,K)}{\frac{1}{2} K^2 \cdot \frac{\partial^2 C}{\partial K^2} (T,K)}.
\end{equation}
In this setting, the function is given as derivatives of the call-option prices. Equivalent functions can be determined for implied volatility quotes as functions of strike, or log-moneyness. For our implementation, we utilize the implied vol representation in terms of log-moneyness. The formula is slightly longer. Let $d_1,d_2$ be the values from Black's formula, $\mathcal{N}(\cdot)$ the standard normal cumulative distribution function, $n(\cdot)$ the standard normal probability distribution function. The local-volatility function is given by
\begin{equation}
\label{eq:app_lv-vol}
\tilde{\sigma}_{LV}^2(T,x)= \frac{N_1 + N_2}{D_1 + D_2 + D_3},
\end{equation}
with
\begin{align*}
N_1 &= k\theta_i\bigl(\mathcal{N}(d_1) - \mathcal{N}(d_2)\bigr), \\[2mm]
N_2 &= F_{t_0}\, n(d_1)\left[\frac{\sigma(T,x)}{2\sqrt{T}} + \sqrt{T}\cdot\frac{\partial\sigma(T,x)}{\partial T}\right],
\end{align*}
and
\begin{align*}
D_1 &= \frac{F_{t_0}^{T}}{2}\, n(d_1)\left[\frac{1}{\sigma(T,x)\sqrt{T}} + \left(\frac{2d_1}{\sigma(T,x)} - \sqrt{T}\right)\cdot\frac{\partial\sigma(T,x)}{\partial x} + \frac{k(\theta_i - X_0 e^{x})\sqrt{T}}{X_0 e^{x}}\cdot\frac{\partial\sigma(T,x)}{\partial x}\right],\\
D_2 &= \frac{\sqrt{T}\, d_1 d_2}{\sigma(T,x)}\cdot\left(\frac{\partial\sigma(T,x)}{\partial x}\right)^{2} ,\\
D_3 &= \sqrt{T}\cdot\frac{\partial^{2}\sigma(T,x)}{\partial x^{2}},
\end{align*}
where $\tilde{\sigma}_{LV}^2(T,x) = {\sigma}_{LV}^2(T,K)$ with $x = \log(K/S_{t_0})$ is the local vol function in terms of moneyness $x$, and $\sigma(T,x)$ the implied volatility of $S$ for expiry $T$ and moneyness $x$. 
As any local volatility model, computation of \Cref{eq:app_lv} requires a smooth surface of option quotes. Since classical parametrizations like SSVI or SABR are not appropriate in this context due to mean-reversion, we adapt the approach of Drimus and Farkas to construct an interpolator function between option-price slices that preserves arbitrage-free calibration. This interpolator is derived from a necessary relation between arbitrage-free call prices under the LV model.

Let $T_1, T_2, \dots, T_N$ be the expiry grid of the futures curve, and let $\theta_n$, $n \leq N$, denote the corresponding long-term mean as before. Let $C(T,K)$ be the price of a VIX call option under the LV-VIX process. From Proposition 3.1 from ~\cite{drimus2013local}, we have for any $i \leq N$ and $T \in [T_{i-1}, T_i]$,
\begin{align}
\label{eq:diff_prices}
   \left[ 1 - \frac{(T-T_{i-1})\tilde{K}^2}{2}\, \nu_i^2(\tilde{K})\, \e^{-2k(T-T_i)}\, \frac{\partial^2}{\partial K^2} \right] C(T,K) = \e^{-(r+k)(T-T_{i-1})}\, C(T_{i-1},\tilde{K}),
\end{align}
where $\tilde{K} = \theta_i + (K-\theta_i)\e^{k(T-T_i)}$ and $\nu_i(\cdot)$ is a non-vanishing function that encodes the interpolation between the quotes at $T_{i-1}$ and $T_i$. Once the interpolation function is calibrated, this differential equation can be used to obtain option prices $C(T,K)$ for any $T$ and $K$

To calibrate the interpolation function, assume we are given market quotes at $T_{i-1}$ and $T_i$. We first recover a continuum of prices in the strike dimension, i.e.\ the full maps $K \mapsto C(T_{i-1},K)$ and $K \mapsto C(T_i,K)$, by fitting a slice-wise implied volatility parametrization such as SVI or randomized SABR~\cite{zaugg2025volatility}. We then introduce an equally spaced strike grid $K_1 < K_2 < \dots < K_L$, with corresponding values $\tilde{K}_1, \dots, \tilde{K}_L$, and collect the (unknown) values of $\nu_i$ on this grid into a vector $\overrightarrow{\nu}_i = (\nu_{i,1}, \nu_{i,2}, \dots, \nu_{i,L})^\top$.

Evaluating \Cref{eq:diff_prices} at $T = T_i$ and discretizing the derivative on the grid yields the linear system
\begin{align}
\label{eq:linearsystem}
A\, c_{T_i} &= \tilde{c}_{T_{i-1}}, \\
c_{T_i} &= \big(C(T_i,K_1), C(T_i,K_2), \dots, C(T_i,K_L)\big)^\top, \nonumber\\
\tilde{c}_{T_{i-1}} &= \big(C(T_{i-1},\tilde{K}_1), C(T_{i-1},\tilde{K}_2), \dots, C(T_{i-1},\tilde{K}_L)\big)^\top,\nonumber
\end{align}
where
$$
A = \e^{-(r+k)\tau}
\begin{pmatrix}
1 - \dfrac{\tau}{2}\,\tilde K_1^2\, \nu_{i,1}^2 & 0 & \cdots & 0 \\[8pt]
0 & 1 - \dfrac{\tau}{2}\,\tilde K_2^2\, \nu_{i,2}^2 & \cdots & 0 \\[8pt]
\vdots & & \ddots & \vdots \\[4pt]
0 & 0 & \cdots & 1 - \dfrac{\tau}{2}\,\tilde K_L^2\, \nu_{i,L}^2
\end{pmatrix} D_2,
$$
$\tau := T_i - T_{i-1}$, and $D_2$ denotes the finite-difference operator for the second derivative in $K$.

Solving this linear system for $\overrightarrow{\nu}_i$, we recover $\nu_i$ as the piecewise-constant function
$$
\nu_i(K) := \sum_{l=1}^{L-1} \nu_{i,l}\, \mathbf{1}_{K \in [K_l, K_{l+1})}.
$$
This defines the interpolation between the two sets of quotes, and the same procedure can be repeated for every $i \leq N$ to recover the interpolation function across the entire expiry grid. With the linear system above and the interpolation function, we can then recover option prices on a fine grid of times $T$ and strikes $K$, allowing us to compute the derivatives necessary to derive \Cref{eq:app_lv-vol}. This enables us to construct the local volatility function. \Cref{fig:local_vols} shows an example of a local vol function for the VIX.
\begin{figure}[H]
    \centering
    \includegraphics[width=\linewidth]{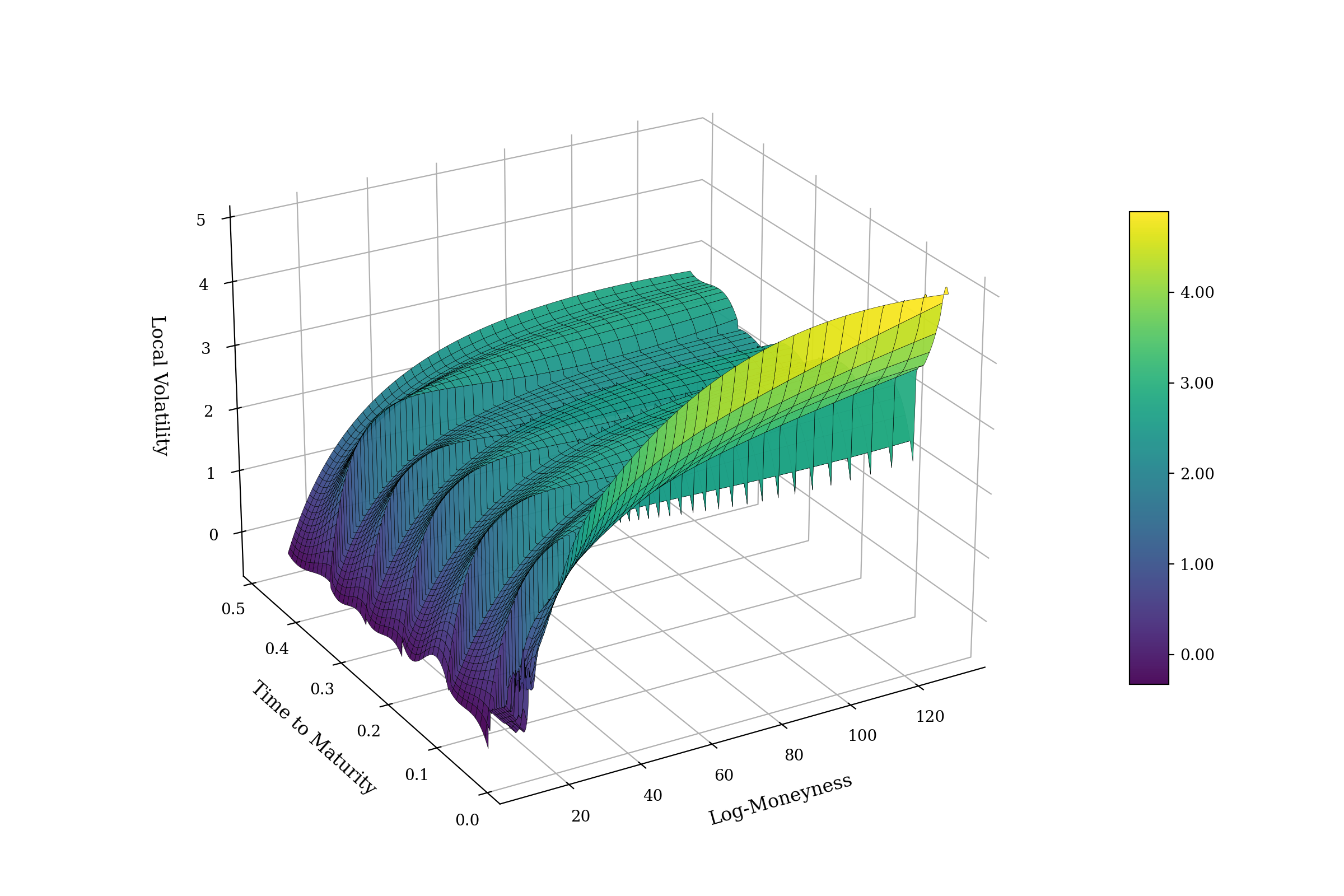}
    \caption{Calibrated VIX Local Vol Surface: $k = 5$. Data from April 27th, 2026}
    \label{fig:local_vols}
\end{figure}
\section{Technical Proofs}
\subsection{Proof of \Cref{lem:epsilon}}
\label{app:proof_epsilon}
Let $q$ be the function at $\psi_{T_F}$ required by the backward equation when $\psi_{T_F+\Theta} = p$. We construct a time-continuous terminal curve such that 
\[\psi_t = \begin{cases}
    p, \quad \text{ if } &t \in [T_F+\delta, T_F+\Theta]\\
    q, \quad \text{ if } &t = T_F\\
    \frac{t - T_F}{\delta} p + \frac{T_F+\delta - t }{\delta}q, \quad \text{ if } &t \in (T_F,T_F+\delta)
\end{cases},\]
where $\delta > 0$ will be determined later. We then compute:
\begin{align*}
    &\norm{\Theta \vix_{T_F}^2 - \mathbb{E}\!\left[\int_{T_F}^{T_F+\Theta} \psi_u(\vix_u)\,\du \;\middle|\; \mathcal{F}_{T_F}\right]}_2 \\
    &\leq \norm{\mathbb{E}\!\left[\int_{T_F}^{T_F+\delta} \psi_u(\vix_u)\,\du \;\middle|\; \mathcal{F}_{T_F}\right]}_2
     + \norm{\Theta \vix_{T_F}^2 - \mathbb{E}\!\left[\int_{T_F+\delta}^{T_F+\Theta} \psi_u(\vix_u)\,\du \;\middle|\; \mathcal{F}_{T_F}\right]}_2\\
     &= \norm{\mathbb{E}\!\left[\int_{T_F}^{T_F+\delta} \psi_u(\vix_u)\,\du \;\middle|\; \mathcal{F}_{T_F}\right]}_2 + \norm{\mathbb{E}\!\left[\int_{T_F}^{T_F+\delta} p(\vix_u)\,\du \;\middle|\; \mathcal{F}_{T_F}\right]}_2
\end{align*}
Since $\psi_u$ and $p$ are polynomials and $\vix_t$ is a polynomial process, there is an $M\geq 0$, such that
\[\norm{\mathbb{E}\!\left[\int_{T_F}^{T_F+\delta} \psi_u(\vix_u)\,\du \;\middle|\; \mathcal{F}_{T_F}\right]}_2 \leq M\delta, \text{ and } \norm{\mathbb{E}\!\left[\int_{T_F}^{T_F+\delta} p(\vix_u)\,\du \;\middle|\; \mathcal{F}_{T_F}\right]}_2 \leq M\delta.\]
Choosing $\delta < \frac{\epsilon}{2M}$ then proves the claim.
\subsection{Proof of \Cref{lem:tech}}
\label{app:proof_lem}
\begin{proof}
Let $(\vix_t)_{t\geq t_0}$ have a generator $\mathcal{L}_\vix = \mu(t,x)\partial_x + \frac{1}{2}\sigma^2(t,x)\partial_{xx}$, where $\mu(t,x) = m_0(t) + m_1(t) x$ and $\sigma^2(t,x) = a_0(t) + a_1(t) x + a_2(t) x^2$ with locally integrable coefficients. On $\mathcal{H}_2 = \text{span}\{1, x, x^2\}$, the generator $\mathcal{L}_t$ has a time-dependent, upper-triangular matrix representation $M(t)$ with diagonal entry $M_{3,3}(t) = 2b_1(t) + a_2(t)$. By the Feynman-Kac backward equation, 
\[\mathbb{E}[X_s^2 \mid X_t = x] = (1, x, x^2) \Phi(t, s) (0, 0, 1)^T,\]  where $\Phi(t, s)$ is the state-transition matrix satisfying $\frac{\partial}{\partial s}\Phi(t, s) = \Phi(t, s)M(s)$ with $\Phi(t, t) = I$. Since $M(\tau)$ is upper-triangular for all $\tau \in [t, s]$, $\Phi(t, s)$ remains upper-triangular, and its $(3,3)$-entry is given explicitly by the scalar linear differential equation solution:
\[
\Phi(t, s)_{3,3} = \exp\left( \int_t^s \left(2b_1(\tau) + a_2(\tau)\right) d\tau \right) > 0.
\]
Thus, the coefficient of $x^2$ is strictly positive, ensuring $\mathbb{E}[X_s^2 \mid X_t = x]$ is degree 2 and non-vanishing for all $s > t$.
\end{proof}
\newpage
\section{Algorithm}
We present a brief algorithmic description to construct the coupling function for general processes using the polynomial approximation method.
\label{app:algo}
\begin{algorithm}
\caption{Construction of the Coupling Function}
\label{alg:coupling-function}
\begin{algorithmic}[1]
\Require Order of projection $m$
\Require Terminal time $T_F$
\Require Drift and diffusion functions $\mu(t,x)$, $\sigma(t,x)$
\Require Time step $\Theta$
\Ensure Sequence of coupling functions $\{\psi_n\}_{n=0}^{N}$

\Function{GetTerminalCurve}{$T_F, m$}
    \State $\mathcal{P} \gets [\,]$ \Comment{Storage for fitted polynomials}
    \For{$k \gets 0$ \textbf{to} $m-1$}
        \State $\bar{M}_{T_F}^k(x) \gets \Call{ComputeConditionalMoment}{T_F, k}$ \Comment{Solve \ref{eq:fc} / integrate in $\tau$}
        \State $p_k \gets \Call{FitPolynomial}{\bar{M}_{T_F}^k(x), m}$ \Comment{Solve \ref{eq:min}}
        \State $\mathcal{P}.\Call{Append}{p_k}$
    \EndFor
    \State $\psi_{T_F} \gets \Call{GetTerminalPsiFromLinearSystem}{\mathcal{P}}$ \Comment{Solve as in \Cref{thm:polynomial}}
    \State \Return $\psi_{T_F}$
\EndFunction

\Statex

\Function{ApplyBackwardEquation}{$\psi, \mu, \sigma, t$}
    \State $\phi \gets \Call{SolveFeynmanKacBackward}{\psi}$ \Comment{Using \Cref{eq:fc_for_psi}}
    \State $\psi_{\mathrm{prev}}(x) \gets \phi(x) - \Theta\left(2x\mu(t,x) + \sigma^2(t,x)\right)$ \Comment{Backward equation (\ref{eq:backwards_main})}
    \State \Return $\psi_{\mathrm{prev}}$
\EndFunction

\Statex

\Function{GetCouplingFunction}{$m, T_F, \mu, \sigma$}
    \State $\psi_{T_F} \gets \Call{GetTerminalCurve}{T_F, m}$ \Comment{Initialize at terminal time}
    \State $\Psi \gets [\psi_{T_F}]$
    \State $t \gets T_F$
    \While{$t > 0$}
        \State $t \gets t - \Theta$
        \State $\psi_t \gets \Call{ApplyBackwardEquation}{\Psi[-1], \mu, \sigma, t}$ \Comment{Step backward in time}
        \State $\Psi.\Call{Append}{\psi_t}$
    \EndWhile
    \State \Return $\Psi$ \Comment{Full sequence of coupling functions}
\EndFunction
\end{algorithmic}
\end{algorithm}
\end{document}